\documentclass[aps,pra,showpacs,twoside,twocolumn,10pt]{revtex4-2}
\usepackage[colorlinks=true, citecolor=red, urlcolor=blue ]{hyperref}
\usepackage{epsfig,newlfont,amssymb,amsfonts,amsmath,bm,subfigure,palatino,mathtools,amsthm,braket,soul,enumitem,color,graphics,graphicx,times,physics,bbold}
\usepackage[normalem]{ulem}
\usepackage{xcolor}
\usepackage{physics}
\usepackage{dsfont}
\usepackage{mathrsfs}
\usepackage{verbatim}
\usepackage{amsthm,amssymb}
\usepackage{comment}
\usepackage{bigints}
\usepackage{amsmath,amssymb,amsthm}

\newtheorem{theorem}{Theorem}
\newtheorem{lemma}{Lemma}

\begin{document}

\AtBeginDocument{
  \renewcommand{\Re}{\mathfrak{R}}
  \renewcommand{\Im}{\mathfrak{I}}
}

\title{ 
Exact Tradeoff Between Quantum Error Correction and Quantum Darwinism: An Information-Theoretic No-Go Theorem
}

\author{Arghya Maity}
\affiliation{School of Physical and Mathematical Sciences, Nanyang Technological University, 21 Nanyang Link, Singapore 637371, Singapore}

\author{Kelvin Onggadinata}
\affiliation{School of Physical and Mathematical Sciences, Nanyang Technological University, 21 Nanyang Link, Singapore 637371, Singapore}

\author{Teck Seng Koh}
\affiliation{School of Physical and Mathematical Sciences, Nanyang Technological University, 21 Nanyang Link, Singapore 637371, Singapore}

\begin{abstract}
Quantum error correction (QEC) and Quantum Darwinism describe opposing consequences of system--environment interactions: QEC seeks to preserve logical quantum information, whereas Quantum Darwinism explains the emergence of objective classical information through its proliferation into the environment. Despite their common physical origin, no direct quantitative connection between these paradigms has previously been established. 
We introduce an exactly solvable block--environment model based on the logical GHZ block of the Shor [[9,1,3]] quantum error-correcting code, collectively coupled to N environment qubits.
The logical fidelity, Holevo information, and Darwinistic redundancy are obtained systematically for arbitrary environment size and imperfect recovery efficiency. Eliminating the common decoherence parameter yields an exact tradeoff relating Darwinistic redundancy directly to the post-recovery logical fidelity, demonstrating that the emergence of redundant classical records occurs at the expense of logical quantum information. We further prove a model-independent no-go theorem showing that the logical fidelity exceeds a critical threshold precludes the emergence of Darwinistic redundancy, irrespective of the microscopic Hamiltonian or environment structure. The solvable model saturates this general bound, establishing the first quantitative information-theoretic connection between logical quantum information protection and the emergence of redundant classical records.
\end{abstract}

\maketitle
\textbf{\textit{Introduction}:}~Quantum
information can either remain protected within a quantum
system or become encoded in its surrounding environment,
but it cannot do both simultaneously.
This fundamental competition lies at the heart of two major
developments in quantum information science: quantum error
correction~\cite{Shor_PRA_1995, Steane_PRL_1996,Shor_PRA_1996,Zurek_PRL_1996,Knill_PRA_1997,Gottesman_Thesis_1997,Shor_IEEE_1998}, which
protects quantum coherence against environmental noise, and
quantum Darwinism~\cite{Zurek_RMP_2003,Zurek_PRA_2005, Zurek_NP_2009, Zurek_PRL_2009, Jelezko_PRL_2019, Paternostro_PRA_2018, Pan_Sci-Bulletin_2019},
which explains the emergence of objective classical reality
through the redundant encoding of information in the
environment. Despite their common origin in system--environment
interactions, a direct quantitative connection between these
two paradigms has not been established.
Such a connection would reveal the fundamental limits
governing both the protection of quantum coherence and the
formation of classical objectivity, and would place the
quantum-to-classical transition on quantitative footing.

The absence of such a quantitative connection is closely
related to the lack of a common analytical framework in which
both quantum error correction and Quantum Darwinism can be
treated on equal footing. While QEC is naturally characterized
by logical fidelities after recovery, Quantum Darwinism is
quantified through the Holevo information and the redundancy
of environment fragments. Bridging these paradigms therefore
requires a model in which both quantities can be obtained
analytically from the same microscopic dynamics. A very recent study explored connections between Quantum
Darwinism and quantum error correction from an
operator-algebraic perspective~\cite{Girard_arXiv_2026}.
Complementary to this algebraic approach, we develop an
exactly solvable dynamical model in which both the logical
fidelity and the Darwinistic redundancy are obtained in
closed form. Eliminating the common decoherence parameter
yields an analytical tradeoff that is independent of the
microscopic dynamical parameters.
%microscopic interaction time, coupling strength, and environment size.

% --- P3: Model ---
In this work, we introduce a block--environment model in
which a logical qubit is encoded in one GHZ block of the
Shor $[[9,1,3]]$ code and interacts collectively with $N$
independent environment qubits through a non-commuting
Hamiltonian. Because each GHZ block experiences identical
block--environment dynamics in the present model, analyzing
a single block faithfully captures the information flow
throughout the full nine-qubit encoding. The commuting sector
of the model is exactly solvable, yielding closed-form
expressions for the logical fidelity, Holevo information,
and Darwinistic redundancy for arbitrary environment size
and imperfect recovery efficiency. The robustness of these
analytical results against the non-commuting interaction is
established by exact numerical simulations.

% --- P4: Results ---
Our central result is a closed-form tradeoff theorem showing
that the Darwinistic redundancy $R_\delta$ is uniquely
determined by the post-recovery logical fidelity
$F_L^{(N)}$, with the microscopic dynamical parameters
eliminated from the final relation. Any gain in classical
objectification therefore comes at a quantifiable cost to
quantum error-correction performance. Beyond the solvable
model, we prove a model-independent no-go theorem showing
that whenever the bare logical fidelity exceeds a threshold
set by the binary entropy, Darwinistic redundancy cannot
emerge irrespective of the microscopic Hamiltonian or
environment structure. Together, these results establish the
first quantitative information-theoretic framework linking
logical quantum information protection to the emergence of
redundant classical information.

% --- P5: Broader significance ---
These results also have a direct operational consequence:
syndrome-based recovery~\cite{Somaroo_PRL_1998,Pan_Nature_2012,Chow_Nature-Comm_2015,Schoelkopf_Nature_2016,Google_Nature_2023,IBM_Nature_2022,Anderson_PRX_2021} and environment-fragment readout~\cite{Jelezko_PRL_2019,Paternostro_PRA_2018, Pan_Sci-Bulletin_2019},
although experimentally distinct, become equivalent probes of the
same underlying decoherence process. Measuring the
post-recovery logical fidelity uniquely determines the
Darwinistic information encoded in the environment, and
vice versa. Because the solvable model saturates the
general no-go bound, it represents a extremal realization
of the competition between logical quantum information
protection and redundant classical information, providing a
general framework for exploring the transition from protected logical quantum information to redundant classical information.

% ============================================================
%Section II: Model and Exact Results
%============================================================
\textbf{\textit{Exactly Solvable Block--Environment Model.}}---
We consider the logical GHZ block of the Shor $[[9,1,3]]$
quantum error-correcting code interacting collectively with an
environment of $N$ independent qubits. 
Since each block--environment pair undergoes identical
dynamics, it is sufficient to focus on a representative
logical block $b$. The logical
basis is formed by the orthogonal codewords
$|\bar z_\pm\rangle_b=(|000\rangle\pm|111\rangle)/\sqrt2$,
while each environment qubit is initially prepared in
$|+\rangle=(|0\rangle+|1\rangle)/\sqrt2$.
The block interacts with the environment through the Hamiltonian
\begin{equation}
\label{eq:H}
\hat H
 =
g_Z\,\hat Z_b\otimes\hat S_Z
+
g_X\,\hat X_b\otimes\hat S_X;~~
\hat S_Z =\sum_{k=1}^{N}Z_k , ~ 
\hat S_X=\sum_{k=1}^{N}X_k,
\end{equation}
where
$\hat Z_b = Z^{\otimes3}$ and
$\hat X_b = X^{\otimes3}$
act on the three-qubit logical block, while
$\hat S_Z$ and $\hat S_X$ act collectively on the
$N$-qubit environment.
Within the logical subspace,
$\hat Z_b$ exchanges the two logical codewords $(\hat{Z}_b\ket{\bar{z}_\pm}=\ket{\bar{z}_\mp})$,
whereas $\hat X_b$ acts as the corresponding logical phase
operator $(\hat{X}_b\ket{\bar{z}_\pm}=\pm \ket{\bar{z}_\pm})$.
The analytical results are obtained in the exactly solvable
limit $
\hat H_0
=
g_Z\,\hat Z_b\otimes\hat S_Z,
$
corresponding to $g_X=0$ (or perturbatively
$g_X\ll g_Z$). The full Hamiltonian
Eq.~(\ref{eq:H}) is retained in the numerical analysis to test
the robustness of the analytical tradeoff beyond the exactly
solvable regime.

Since $\left[ \hat Z_b\otimes Z_i,\, \hat Z_b\otimes Z_j \right] = 0,$ the propagator generated by $H_0$ factorizes exactly,
$
\hat U_0(t) = e^{-it\hat H_0} = \prod_{k=1}^{N}
\exp\!\left(
-ig_Zt\,\hat Z_b\otimes Z_k \right).$
The detailed derivation is given in the Appendix. Starting from
$
|\Psi(0)\rangle
=
|\bar z_+\rangle_b
\otimes
|+\rangle^{\otimes N},
$
the exact evolved state admits the branch decomposition
\begin{equation}
|\Psi_{+}(t)\rangle
=
\tfrac{1}{2}
\bigl(|\bar{z}_+\rangle_b+|\bar{z}_-\rangle_b\bigr)
|\phi_+(t)\rangle
+
\tfrac{1}{2}
\bigl(|\bar{z}_+\rangle_b-|\bar{z}_-\rangle_b\bigr)
|\phi_-(t)\rangle,
\label{eq:psi}
\end{equation}
where the conditional environment states are
$
|\phi_\pm(t)\rangle
=
e^{\mp ig_Zt\hat S_Z}
|+\rangle^{\otimes N}.
$
The analytically solvable interaction is a pure-dephasing
model in the $\hat Z_b$ eigenbasis
$\{|000\rangle,|111\rangle\}$, analogous to standard
spin-environment dephasing models~\cite{Zurek_PRD_1982}.
Throughout this work, however, we describe the dynamics in
the logical basis
$\{|\bar z_+\rangle,|\bar z_-\rangle\}$,
where $\hat Z_b$ exchanges the logical codewords
($\hat Z_b|\bar z_\pm\rangle=|\bar z_\mp\rangle$),
so that the same interaction appears as logical-state
transitions.
Because the operators \(Z_k\) commute, these states admit the
product representation
\begin{equation}
|\phi_\pm(t)\rangle
=
\bigotimes_{k=1}^{N}
\left[
\cos(g_Zt)|+\rangle_k
\mp
i\sin(g_Zt)|-\rangle_k
\right].
\end{equation}
Tracing out the environment gives $\rho_b(t) = \mathrm{diag}\left[(1\!+\!\Gamma_N)/2,\,(1\!-\!\Gamma_N)/2\right]$
in the logical basis
\(\{|\bar z_+\rangle_b,|\bar z_-\rangle_b\}\), governed entirely by the decoherence factor
\begin{equation}
\Gamma_N(t)
=
\langle\phi_-(t)|\phi_+(t)\rangle
=
\cos^N(2g_Zt).
\label{eq:Gamma}
\end{equation}
Thus the complete reduced dynamics of the logical block is
governed by the single decoherence factor \(\Gamma_N(t)\).
\begin{figure*}[htb]
\includegraphics[width=\textwidth]{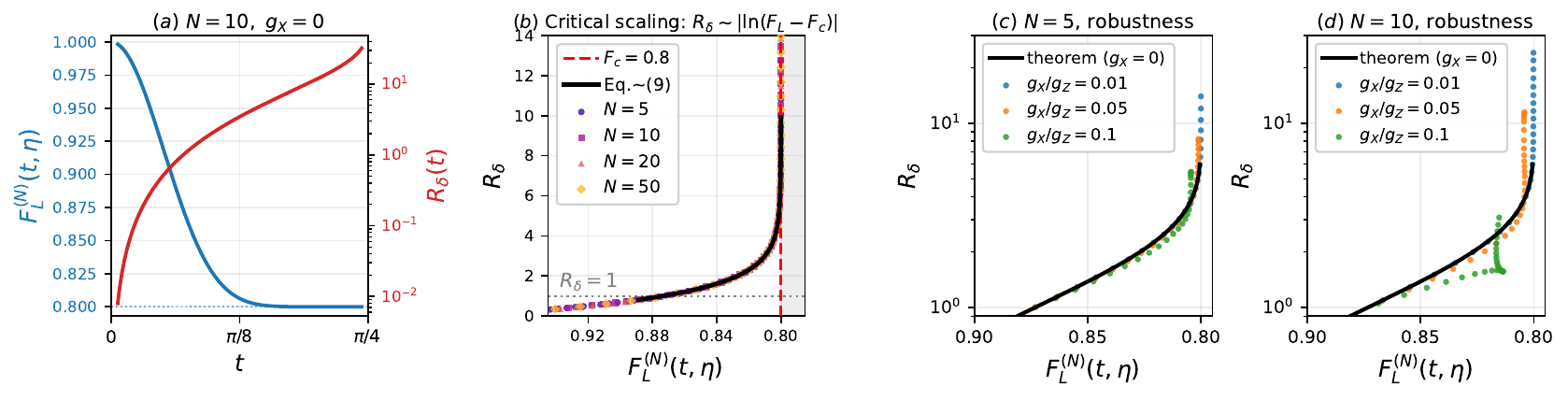}
\caption{
Numerical results for the Darwinism--QEC competition
($g_Z=1$, $\eta=0.6$, $\delta=0.10$).
\textit{(a)}~Competition in time for $N=10$, $g_X=0$.
Blue curve (left axis): logical fidelity $F_L^{(N)}(t,\eta)$,
decaying monotonically from unity toward the floor
$F_c=(1+\eta)/2=0.8$ (dotted line).
Red curve (right axis, log scale): Darwinistic redundancy
$R_\delta(t)$, rising simultaneously.
Solid lines are the closed-form analytical predictions
Eqs.~\eqref{eq:FL} and~\eqref{eq:R};
exact diagonalization data (not shown separately) agrees to these.
\textit{(b)}~Critical Darwinism--QEC scaling.
$R_\delta$ is plotted against the logical fidelity $F_L^{(N)}$
on linear axes for $N=5,10,20,50$ (colored points).
The black curve is the exact tradeoff
(Eq.~\eqref{eq:tradeoff}).
Data from all four environment sizes collapse onto the same
analytical curve, confirming that $N$ is absorbed entirely
into the fidelity gap $\varepsilon=F_L^{(N)}-F_c$.
The red dashed line marks the critical fidelity $F_c=0.8$;
as $F_L^{(N)}\to F_c^+$, the redundancy diverges
logarithmically ($R_\delta\sim|\ln\varepsilon|$), while for
$F_L^{(N)}>F_{\mathrm{th}}=0.874$ the redundancy is identically
zero (no Darwinism).
\textit{(c),(d)}~Robustness of the tradeoff beyond the
analytically solvable limit for $N=5$ and $N=10$.
Colored points show exact diagonalization results at
$g_X/g_Z=0.01,\,0.05,\,0.10$; the black curve is the
$g_X=0$ theorem Eq.~\eqref{eq:tradeoff}.
The tradeoff remains in excellent agreement with the
analytical prediction for $g_X/g_Z\lesssim0.05$, with
systematic deviations growing as $g_X$ increases.
}
\label{fig:competition}
\end{figure*}

\textit{Logical fidelity.}---
We model recovery via the
imperfect channel
$
\mathcal E_\eta(\rho)
=
P_+\rho P_+
+
\eta\,
\bigl(\hat Z_b\otimes\mathbb I_E\bigr)
P_-\rho P_-
\bigl(\hat Z_b\otimes\mathbb I_E\bigr)
+
(1-\eta)\,
P_-\rho P_-,
$
where
$
P_\pm
=
|\bar z_\pm\rangle_b
\langle\bar z_\pm|
\otimes
\mathbb I_E
$
are the syndrome projectors, and
$\eta\in[0,1]$ denotes the probability of successfully
correcting the $P_-$ branch. 
Tracing out the environment after applying $\mathcal{E}_\eta$
gives the exact post-recovery logical fidelity
\begin{equation}
F_L^{(N)}(t,\eta)
=
\frac{
1+\eta+(1-\eta)\Gamma_N(t)
}{2}.
\label{eq:FL}
\end{equation}
\textit{Holevo information.}---
The classical information accessible to an observer holding a
fragment of $m$ environment qubits is quantified by the
Holevo information
\begin{equation}
\chi_m
=
S(\bar\rho_F)
-
\frac12S(\rho_F^+)
-
\frac12S(\rho_F^-),
\end{equation}
where
$S(\rho)=-\mathrm{Tr}(\rho\ln\rho)$
is the von Neumann entropy,
$\rho_F^\pm$ are the fragment states conditioned on the two
logical codewords,
and
$\bar\rho_F=(\rho_F^++\rho_F^-)/2$
is their average.
Because $|\phi_\pm(t)\rangle$ are product states, the conditional
fragment states are pure ($S(\rho_F^\pm)=0$) and their overlap
factorizes as $\gamma_m(t)=\cos^m(2g_Zt)=[\Gamma_N(t)]^{m/N}$.
The Holevo information is therefore (Appendix)
\begin{equation}
\chi_m(t)
=
H_2\!\left(
\frac{1+\cos^m(2g_Zt)}{2}
\right),
\label{eq:chi}
\end{equation}
where $H_2(p)=-p\ln p-(1-p)\ln(1-p)$ is the binary entropy.
For any nontrivial interaction
($|\cos(2g_Zt)|<1$),
the overlap decreases exponentially with fragment size, so
$\chi_m$ increases monotonically toward its maximum value
$\ln2$, reflecting the increasing distinguishability of the
conditional fragment states and hence the progressive
accumulation of accessible classical information about the
logical state in the environment.

\textit{Darwinistic redundancy.}---Following
Refs.~\cite{Zurek_PRA_2005,Zurek_NP_2009,
Castro_PRL_2019,Korbicz_Quantum_2021},
the Darwinistic redundancy $R_\delta$ quantifies how many
independent observers can simultaneously infer the logical
state by accessing disjoint environment fragments.
Here $\delta\in(0,1)$ specifies the tolerated information
deficit relative to the maximum accessible classical
information $\ln2$.
The redundancy is defined as
$R_\delta=N/m_\delta$, where
$m_\delta=\min\{m:\chi_m\ge(1-\delta)\ln2\}$
is the minimum fragment size satisfying the Darwinism
criterion.
%The redundancy is defined as $R_\delta=N/m_\delta$, where $m_\delta$ is the minimum fragment size satisfying $\chi_{m_\delta}\ge(1-\delta)\ln2$.
Combining Eqs.~\eqref{eq:chi} and~\eqref{eq:Gamma} yields the
closed-form expression
\begin{equation}
R_\delta(t)
=
N\,
\frac{\ln[\cos(2g_Zt)]}
{\ln\gamma^*(\delta)},
\label{eq:R}
\end{equation}
where $\gamma^*(\delta)\in(0,1)$ is the threshold fragment
overlap defined implicitly by
$
H_2\!\left(\frac{1+\gamma^*}{2}\right)
=
(1-\delta)\ln2.
$
Equation~\eqref{eq:R} establishes the analytical dependence
of Darwinistic redundancy on the microscopic decoherence
process, providing the quantitative link between logical
quantum information preservation and the redundant
encoding of classical information in the environment.

\textit{Exact tradeoff theorem.}---
The logical fidelity and the Darwinistic redundancy are both
controlled by the same decoherence factor $\Gamma_N(t)$.
Eliminating $\Gamma_N$ between
Eqs.~\eqref{eq:FL} and~\eqref{eq:R}
yields the central result
\begin{equation}
R_\delta
=
\frac{1}{\ln\gamma^*(\delta)}
\ln\!\left[
\frac{2F_L^{(N)}-(1+\eta)}{1-\eta}
\right].
\label{eq:tradeoff}
\end{equation}
Equation~\eqref{eq:tradeoff} is independent of the microscopic
dynamical parameters: the interaction time, coupling strength,
and environment size are completely eliminated.
The Darwinistic redundancy is therefore determined solely by
the post-recovery logical fidelity $F_L^{(N)}$ and the recovery
efficiency $\eta$.
Since $\ln\gamma^*(\delta)<0$,
$R_\delta$ decreases monotonically with $F_L^{(N)}$,
showing that any gain in logical quantum information comes
at a quantifiable cost to the redundant encoding of classical
information.
The tradeoff depends only on the observable logical fidelity,
not on the microscopic parameter values that generated it.

Equation~\eqref{eq:tradeoff} also has a direct operational
interpretation.
Because both the logical fidelity and the Holevo information
are determined by the same decoherence factor,
syndrome-based recovery and environment-fragment readout
provide equivalent experimental probes of the same
system--environment information flow.
Experimentally, measuring the post-recovery logical fidelity
uniquely identifies the underlying decoherence factor.
Since the Holevo information of every environment fragment
depends on the same quantity, the entire fragment-information
curve can then be reconstructed without direct measurements
on the environment. Conversely, measurements of the
fragment-information curve determine the logical fidelity
through Eq.~\eqref{eq:tradeoff}. The tradeoff therefore
establishes a direct operational bridge between quantum error
correction and Quantum Darwinism, demonstrating that logical
quantum information preservation and redundant classical
information are complementary manifestations of the same
underlying system--environment information flow.

\textit{Logarithmic sensitivity near full objectification.}---
Defining the fidelity gap
$\varepsilon=F_L^{(N)}-F_c\ge0$,
which measures the distance from the minimum achievable
logical fidelity,
with
$F_c=(1+\eta)/2$,
the tradeoff becomes
\begin{equation}
R_\delta
=
-\frac{1}{|\ln\gamma^*(\delta)|}
\ln (F_L^{(N)}-F_c)
+\mathrm{const},
\label{eq:log_sensitivity}
\end{equation}
where the additive constant depends only on the recovery
efficiency $\eta$.
Equation~\eqref{eq:log_sensitivity} shows that the redundancy
grows logarithmically as the logical fidelity approaches
$F_c$, corresponding to complete objectification
($\Gamma_N\rightarrow0$).
Correspondingly, the sensitivity of the redundancy to changes
in the logical fidelity is
\begin{equation}
\left|
\frac{\partial R_\delta}
{\partial F_L^{(N)}}
\right|
=
\frac{1}
{|\ln\gamma^*(\delta)|\,\varepsilon},
\end{equation}
which diverges as $\varepsilon\rightarrow0^+$, showing that
near complete objectification an arbitrarily small decrease
in logical fidelity produces an increasingly large increase
in Darwinistic redundancy.
For finite environments this growth is naturally cut off by
the bound $R_\delta\le N$, while numerical data for different
environment sizes collapse onto the exact tradeoff curve
(Fig.~\ref{fig:competition}b), confirming that the tradeoff
depends only on the logical fidelity and not on the
microscopic dynamical parameters.
Equation~\eqref{eq:log_sensitivity} therefore identifies a
distinct logarithmic-sensitivity regime in which a small loss
of logical quantum information produces a disproportionately
large increase in redundant classical records.

% ============================================================
% PRL Section III: No-Go Theorem
% ============================================================

\textbf{\textit{No-go theorem for Darwinistic redundancy.}}---We now show that the competition revealed by the tradeoff theorem is not specific to the solvable model, but follows from a general information-theoretic bound.

% ---- Lemma ----
\begin{lemma}[Block entropy bound]
\label{lem:entropy_bound}
%\textbf{\textit{Lemma.}}---
For any qubit block state $\rho_B$ with
bare logical fidelity
$F_{\mathrm{bare}}=\langle\bar{z}_+|\rho_B|\bar{z}_+\rangle$,
the von Neumann entropy satisfies
\begin{equation}
S(\rho_B)
\le
H_2(F_{\mathrm{bare}}),
\label{eq:lemma}
\end{equation}
with equality if and only if $\rho_B$ is diagonal in the
logical basis.
\end{lemma}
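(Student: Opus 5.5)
The plan is to compare $\rho_B$ with its pinched (dephased) version in the logical basis and use the fact that dephasing cannot decrease entropy. Here $\rho_B$ is understood as an operator on the two-dimensional logical subspace spanned by $\{|\bar z_+\rangle_b,|\bar z_-\rangle_b\}$; this is the sense in which it is a "qubit block state". Define the pinching channel
\begin{equation*}
\Delta(\rho_B)=\sum_{s=\pm}|\bar z_s\rangle\langle\bar z_s|\,\rho_B\,|\bar z_s\rangle\langle\bar z_s| .
\end{equation*}
Since $\rho_B$ has unit trace on this subspace, $\Delta(\rho_B)=\mathrm{diag}(F_{\mathrm{bare}},1-F_{\mathrm{bare}})$. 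Its entropy is therefore exactly $H_2(F_{\mathrm{bare}})$, so the inequality reduces to showing $S(\rho_B)\le S(\Delta(\rho_B))$.

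For this step I would use the relative-entropy identity
\begin{equation*}
S(\Delta(\rho_B))-S(\rho_B)=D\bigl(\rho_B\,\|\,\Delta(\rho_B)\bigr).
\end{equation*}
The identity holds because $\ln\Delta(\rho_B)$ is diagonal in the logical basis. Consequently $\mathrm{Tr}[\rho_B\ln\Delta(\rho_B)]=\mathrm{Tr}[\Delta(\rho_B)\ln\Delta(\rho_B)]$, since only the diagonal entries of $\rho_B$ enter the trace. Klein's inequality then gives $D\ge0$, which is the claimed bound. As an alternative, one could note that the diagonal of a Hermitian matrix is majorized by its spectrum (Schur--Horn) and invoke Schur concavity of $S$.

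The step I expect to need the most care is the equality case. By the faithfulness of relative entropy, $D(\rho\|\sigma)=0$ if and only if $\rho=\sigma$, so equality holds exactly when $\rho_B=\Delta(\rho_B)$, i.e.\ when $\rho_B$ is diagonal in the logical basis. A subtlety arises when $F_{\mathrm{bare}}\in\{0,1\}$, because $\Delta(\rho_B)$ is then singular and $\ln\Delta(\rho_B)$ is not defined on all of the subspace. In that case I would argue directly from positivity. For example, if $F_{\mathrm{bare}}=1$, then $\langle\bar z_-|\rho_B|\bar z_-\rangle=0$ and $\rho_B\ge0$ force the off-diagonal elements to vanish, so $\rho_B$ is already diagonal and pure, and both sides equal zero.

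The qubit case can also be made fully explicit as a cross-check. Write the eigenvalues of $\rho_B$ as $(1\pm r)/2$ with $r=\sqrt{(2F_{\mathrm{bare}}-1)^2+4|c|^2}$, where $c$ is the off-diagonal element. Then $S(\rho_B)=H_2\bigl((1+r)/2\bigr)$, and $H_2$ is strictly decreasing in $r$ on $[0,1]$. Since $r\ge|2F_{\mathrm{bare}}-1|$, with equality if and only if $c=0$, this gives both the inequality and its equality condition at once.

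If $\rho_B$ were instead meant to act on the full eight-dimensional three-qubit space, the statement would need to be restricted to the code subspace. I would state that restriction explicitly, since this is how the lemma is applied later in the paper.
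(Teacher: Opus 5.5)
Your proof is correct, but your main argument takes a different route from the paper's. The paper works directly with $\rho_B=\bigl(\begin{smallmatrix}F_{\mathrm{bare}}&c\\ c^*&1-F_{\mathrm{bare}}\end{smallmatrix}\bigr)$. It asserts that $S(\rho_B)$ equals $H_2(F_{\mathrm{bare}})$ at $c=0$ and decreases to zero as $|c|\to\sqrt{F_{\mathrm{bare}}(1-F_{\mathrm{bare}})}$.

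Your ``cross-check'' is this same argument made rigorous. Your eigenvalue formula with $r=\sqrt{(2F_{\mathrm{bare}}-1)^2+4|c|^2}$ supplies the strict monotonicity in $|c|$ that the paper only asserts, and the ``only if'' part of the equality clause depends on exactly that strictness.

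Your primary route uses the pinching $\Delta$ and the identity $S(\Delta(\rho_B))-S(\rho_B)=D(\rho_B\|\Delta(\rho_B))\ge 0$, which is the standard ``dephasing cannot decrease entropy'' argument. It never uses the qubit structure, so it extends verbatim to any dimension and any orthonormal basis. It also gets the equality case from faithfulness of relative entropy rather than from a calculus fact. The cost is invoking Klein's inequality plus a separate treatment of $F_{\mathrm{bare}}\in\{0,1\}$. That separate case is not actually needed: positivity gives $\mathrm{supp}\,\rho_B\subseteq\mathrm{supp}\,\Delta(\rho_B)$, so the identity still holds with $0\ln 0=0$.

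Your closing caveat is well taken. For a state of the full three-qubit block with weight outside the code space, the entropy can reach $H_2(F_{\mathrm{bare}})+(1-F_{\mathrm{bare}})\ln 7$, so the lemma fails there. The no-go theorem therefore implicitly needs $\rho_B$ supported on $\mathrm{span}\{|\bar z_+\rangle,|\bar z_-\rangle\}$. This does hold in the paper's model, because both $\hat Z_b$ and $\hat X_b$ preserve the code space.
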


\begin{proof}
%\textit{Proof.}
Writing $\rho_B=\bigl(\begin{smallmatrix}F_{\mathrm{bare}}&c\\
c^*&1-F_{\mathrm{bare}}\end{smallmatrix}\bigr)$ 
in the logical basis, the entropy $S(\rho_B)$
achieves its maximum value $H_2(F_{\mathrm{bare}})$ at $|c|=0$
(diagonal state) and decreases to zero as $|c|$ increases
to its maximum value $\sqrt{F_{\mathrm{bare}}(1-F_{\mathrm {bare}})}$
(pure state).
Hence $S(\rho_B)\le H_2(F_{\mathrm{bare}})$ for all $|c|$.
%\hfill$\square$
\end{proof}

\begin{theorem}[Theorem (No-go).]
\label{thm:nogo_SM}
%\textbf{\textit{Theorem (No-go).}}---
Let $|\Psi\rangle_{BE}$ be a pure state of a qubit block $B$ and an arbitrary environment $E$.
If
\begin{equation}
F_{\mathrm{bare}}
>
H_2^{-1}\!\bigl[(1-\delta)\ln 2\bigr],
\label{eq:nogo}
\end{equation}
no environment fragment can satisfy the Darwinism criterion, regardless of the Hamiltonian or environment structure. Consequently, Darwinistic redundancy cannot exist $(R_\delta =0)$. 
\end{theorem}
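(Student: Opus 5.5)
The plan is to bound the Holevo information of every environment fragment by the entropy of the block, and then invoke Lemma~\ref{lem:entropy_bound}. Fix an arbitrary fragment $F\subseteq E$ and write $E=F\bar F$. The classical information about the logical state held by $F$ is the Holevo quantity $\chi_F$ of the ensemble $\{\tfrac12,\rho_F^\pm\}$ induced on $F$ by the logical codewords. I will show $\chi_F\le S(\rho_B)$ in two steps. First, $\chi_F$ is at most the quantum mutual information $I(B{:}F)$, because the ensemble on $F$ arises from a measurement (or dephasing) of $B$ in the logical basis, and data processing applied to the $B$ side can only decrease mutual information. Second, since $|\Psi\rangle_{BE}$ is pure,
\begin{equation}
I(B{:}F)=S(\rho_B)+S(\rho_F)-S(\rho_{BF})=S(\rho_B)+S(\rho_F)-S(\rho_{\bar F})\le 2S(\rho_B).
\end{equation}
This loses a factor of two, so I would not rely on it. Instead I would use the Holevo bound directly: the classical register obtained by measuring $B$ in the logical basis has entropy $H(p)$, with $p$ the outcome distribution, and $\chi_F\le H(p)$. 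That outcome distribution is $(F_{\rm bare},1-F_{\rm bare})$, so $\chi_F\le H_2(F_{\rm bare})$. This bound does not even need the Lemma, although the Lemma gives $S(\rho_B)\le H_2(F_{\rm bare})$ consistently.

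The key steps, in order, are as follows.
\begin{enumerate}
\item Make precise the ensemble used in the Darwinism criterion. In the general setting, $\rho_F^\pm$ are the (normalized) fragment states conditioned on the logical outcomes $\bar z_\pm$, with weights $p_\pm=\langle\bar z_\pm|\rho_B|\bar z_\pm\rangle$. Then $\chi_F=S(\sum p_\pm\rho_F^\pm)-\sum p_\pm S(\rho_F^\pm)\le H(\{p_\pm\})=H_2(F_{\rm bare})$, by the standard fact that the Holevo quantity is bounded by the Shannon entropy of the prior.
\item Note that the hypothesis \eqref{eq:nogo} is understood with $H_2^{-1}$ taken as the branch on $[1/2,1]$, where $H_2$ is strictly decreasing. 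Then $F_{\rm bare}>H_2^{-1}[(1-\delta)\ln2]$ gives $H_2(F_{\rm bare})<(1-\delta)\ln 2$. (If $F_{\rm bare}<1/2$, the statement is read via symmetry of $H_2$; I would state this branch convention explicitly.)
\item Combine the two: every fragment, including $F=E$, has $\chi_F<(1-\delta)\ln2$. Hence the set defining $m_\delta$ is empty, no fragment meets the criterion, and one sets $R_\delta=0$. Nothing in this argument uses the Hamiltonian or the structure of $E$.
\end{enumerate}

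The main obstacle is step 1, namely fixing which ensemble the paper's $\chi_m$ refers to outside the solvable model. With equal priors $\tfrac12$, as in the displayed definition of $\chi_m$, the bound by $H_2(F_{\rm bare})$ fails, since the prior entropy is then $\ln 2$. I would therefore argue that the physically correct priors are the Born weights $p_\pm$. In the solvable model the fragment states are pure with priors given by the block populations, and there the bound is attained exactly, which matches the saturation claim. As a fallback, I would use $\chi_F\le I(B{:}F)$ with $B$ dephased in the logical basis; for the resulting classical–quantum state the mutual information is bounded by $H(X)=H_2(F_{\rm bare})$, which gives the same conclusion without ambiguity.
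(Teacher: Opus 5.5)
Your proof is correct under the reading you fix (the ensemble on $F$ induced by measuring $B$ in the logical basis, with Born weights $p_\pm$), but it follows a different route from the paper.

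The paper proves the chain $\chi(F)\le\chi(E)\le S(\rho_E)=S(\rho_B)\le H_2(F_{\mathrm{bare}})$. It uses data processing under the partial trace, the Holevo bound, the purification identity for $|\Psi\rangle_{BE}$, and then Lemma~\ref{lem:entropy_bound}. You collapse this to the single inequality $\chi_F=I(X{:}F)\le H(X)=H_2(F_{\mathrm{bare}})$. That step needs neither purity nor the Lemma, so it also holds for mixed block--environment states.

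What the paper's detour through $S(\rho_B)$ buys is basis independence. Because $S(\rho_B)$ refers to no basis, the same chain bounds the Holevo information of the ensemble obtained by measuring $B$ in \emph{any} basis with Born weights. In particular it covers the $\hat Z_b$ pointer basis $\{|000\rangle,|111\rangle\}$, where your prior-entropy bound degenerates to $\ln 2$.

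Your remark about the priors is well taken, and it applies to the paper's proof too. The step $\chi(E)\le S(\rho_E)$ silently requires the ensemble to average to the true $\rho_E$, i.e.\ Born weights. With equal priors on logical-basis conditional states the statement is false. For example, $\sqrt{F}\,|\bar z_+\rangle|e_0\rangle+\sqrt{1-F}\,|\bar z_-\rangle|e_1\rangle$ with orthogonal $|e_i\rangle$ gives $\chi=\ln 2$ for every $F\in(0,1)$.

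One claim in your last paragraph is wrong, although it does not affect the validity of your bound. In the solvable model, the pure fragment states $|\phi_\pm^{(m)}\rangle$ with weights $\tfrac12$ are conditioned on $|000\rangle$ and $|111\rangle$. They are not conditioned on $|\bar z_\pm\rangle$ with weights $(1\pm\Gamma_N)/2$.

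Conditioning on the logical codewords instead gives environment states proportional to $|\phi_+\rangle\pm|\phi_-\rangle$. These are entangled across the environment qubits, so their fragment reductions are in general mixed for $m<N$. Consequently your $\chi_F$ coincides with the paper's $\chi_m$ only at $m=N$, and is in general strictly smaller for $m<N$. Your formalization therefore does not reproduce the quantity behind the paper's saturation remark. It is the paper's basis-independent chain that covers that ensemble.
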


\begin{proof}
%\textit{Proof.}
Suppose, for contradiction, that Darwinistic redundancy
exists. Then some environment fragment satisfies
$\chi(F)\ge(1-\delta)\ln2$.
Using
\[
\chi(F)\le\chi(E)\le S(\rho_E)=S(\rho_B)
\]
together with Lemma~\ref{eq:lemma}, we obtain
\[
H_2(F_{\mathrm{bare}})
\ge
S(\rho_B)
=
S(\rho_E)
\ge
\chi(E)
\ge
\chi(F)
\ge
(1-\delta)\ln2.
\]
Since $H_2(p)$ is monotonically decreasing on
$p\in[1/2,1]$, this implies
\[
F_{\mathrm{bare}}
\le
H_2^{-1}\!\left[(1-\delta)\ln2\right],
\]
which contradicts Eq.~\eqref{eq:nogo}. Therefore no
environment fragment can satisfy the Darwinism criterion,
and hence Darwinistic redundancy cannot exist.
%\hfill$\square$
\end{proof}

\textbf{\textit{Corollary.}}---For the imperfect-recovery model,
$F_{\mathrm{bare}}=(F_L^{(N)}-\eta)/(1-\eta)$, so
Eq.~\eqref{eq:nogo} becomes
\begin{equation}
F_L^{(N)}
>
\eta+(1-\eta)\,H_2^{-1}\!\bigl[(1-\delta)\ln2\bigr]
\implies
R_\delta=0.
\label{eq:nogo_FL}
\end{equation}
For $\delta=0.10$ and $\eta=0.60$ this threshold is
$F_L^{(N)}>0.874$.

\textit{Remark.}---The solvable model saturates every
step of the proof chain.
The reduced block state is diagonal in the logical basis,
giving $S(\rho_B)=H_2(F_{\mathrm{bare}})$,
while the conditional environment states remain pure,
yielding $\chi(E)=S(\rho_E)$.
Together, these relations collapse the proof chain into the equality
$$
\chi(E)
=
S(\rho_E)
=
S(\rho_B)
=
H_2(F_{\mathrm{bare}}).
$$
Consequently, every bit of logical entropy generated in the
block is converted into accessible classical information in
the environment. The solvable model therefore achieves the
largest Darwinistic redundancy compatible with a given
logical fidelity, while the no-go theorem shows that no
other dynamics can exceed this information-theoretic limit.

% ================================
% PRL Section IV: Numerical Results
%==================================
\begin{figure}[htb]
\includegraphics[width=\columnwidth]{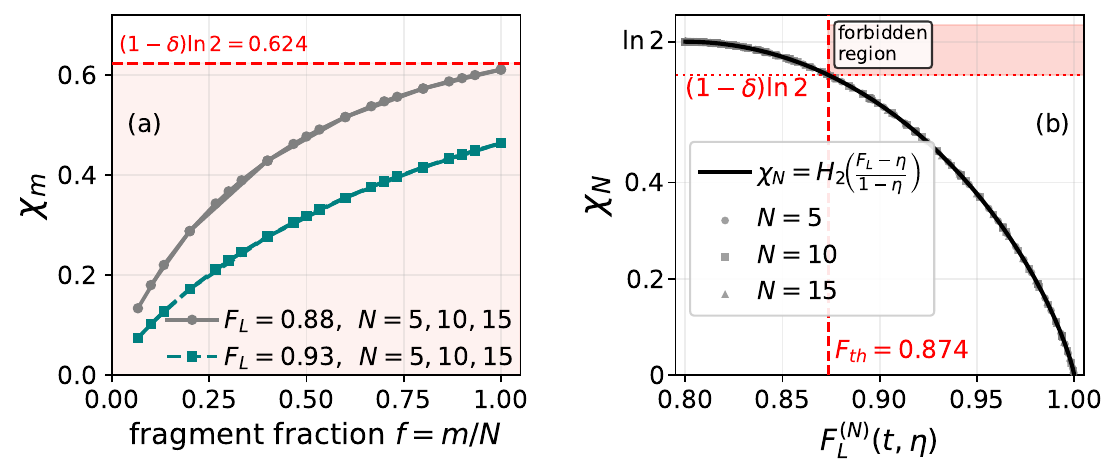}
\caption{
Numerical confirmation of the no-go theorem
, $\delta=0.10$, $\eta=0.60$).
\textit{(a)}~Fragment Holevo information $\chi_m$ as a function
of the fragment fraction $f=m/N$, computed for $N=5,10,15$
at two representative logical fidelities $F_L=0.88$
and $F_L=0.93$, both lying above the no-go threshold
$F_{\mathrm{th}}=0.874$.
Every curve remains strictly below the Darwinism threshold
$(1-\delta)\ln 2 = 0.624$~nat (red dashed line) for all
fragment sizes $m=1,\ldots,N$, demonstrating that no environment
fragment---however large---carries sufficient classical information
to establish Darwinistic redundancy when $F_L > F_{\mathrm{th}}$.
\textit{(b)}~Full-environment Holevo information $\chi_N$ as a
function of logical fidelity $F_L^{(N)}$, for $N=5,10,15$
at $g_X=0$ (closed-form, colored by $N$) and for $N=10$
at $g_X/g_Z=0.01,\,0.05,\,0.10$ (exact diagonalization,
distinct markers).
The black curve is the exact analytical result
$\chi_N = H_2\!\bigl((F_L^{(N)}-\eta)/(1-\eta)\bigr)$.
The salmon-shaded rectangle defines the forbidden region:
$F_L > F_{\mathrm{th}}$ and $\chi_N \ge (1-\delta)\ln 2$.
Of the data points spanning three environment sizes and
four coupling ratios ($g_X/g_Z=0$--$0.10$),
\emph{not a single point enters the forbidden rectangle},
confirming that the no-go bound is tight at $F_L=F_{\mathrm{th}}$
and is never violated.
}
\label{fig:nogo}
\end{figure}
% ============================================================
%Section IV: Numerical Results 
% ============================================================
\textbf{\textit{Numerical Validation.}}---We validate the
analytical predictions and test their robustness 
beyond the analytically solvable limit by exact diagonalization of the full Hamiltonian
$\hat{H}=g_Z\hat{Z}_b\!\otimes\!\hat{S}_Z
+g_X\hat{X}_b\!\otimes\!\hat{S}_X$
for several environment sizes and coupling ratios $g_X/g_Z$.

Figure~\ref{fig:competition}a illustrates the Darwinism--QEC
competition for $N=10$ at $g_X=0$. As the logical fidelity $F_L^{(N)}$ decreases toward its limiting value, the Darwinistic redundancy $R_\delta$ grows monotonically, in quantitative agreement with Eqs.~\eqref{eq:FL} and~\eqref{eq:R}.

Figure~\ref{fig:competition}b confirms the logarithmic critical
scaling of Eq.~\eqref{eq:log_sensitivity}: plotted as $R_\delta$ versus $F_L^{(N)}$ collapse onto the single analytical curve for different $N$, verifying that $N$ and $t$ enter only through
the fidelity gap $\varepsilon=F_L^{(N)}-F_c$ and demonstrating that the tradeoff is independent of the microscopic dynamical parameters.
The redundancy diverges as $F_L^{(N)}\to F_c^+$ and vanishes
above the no-go threshold $F_L^{(N)}>0.874$, both in excellent
agreement with the analytical predictions.

Figures~\ref{fig:competition}c--d demonstrate the robustness
of the analytical tradeoff against the non-commuting
perturbation $g_X\hat{X}_b\otimes\hat{S}_X$.
For $g_X/g_Z=0.01$, deviations remain below $0.2\%$;
for $g_X/g_Z=0.05$ ($0.10$) they reach $4\%$ ($12\%$),
growing as $g_X$ approaches the solvable interaction
strength.

Figure~\ref{fig:nogo} provides numerical confirmation of the
model-independent no-go theorem.
Panel~(a) shows that, for logical fidelities above the
predicted threshold, the Holevo information of every
environment fragment remains below the Darwinism criterion,
independent of fragment size.
Panel~(b) demonstrates that numerical data obtained for
different environment sizes and coupling ratios never enter
the forbidden region predicted by
Theorem \ref{thm:nogo_SM}, confirming the general bound beyond
the exactly solvable limit. 

% ============================================================
% Section V: Discussion
% ============================================================
\textbf{\textit{Discussion.}}---
 The present work places the relationship between quantum error correction and Quantum Darwinism on a quantitative footing by establishing a direct information-theoretic bridge between the two.
Within an exactly solvable block--environment model, we derive a parameter-free tradeoff between the post-recovery logical fidelity and Darwinistic redundancy. Because both quantities are governed by the same decoherence parameter, the logical fidelity uniquely determines the Holevo information of environment fragments—and hence the Darwinistic redundancy—establishing logical recovery and environment-fragment readout as equivalent operational probes of the same underlying decoherence process. Because the solvable model saturates the underlying information-theoretic bounds, every loss of logical information is exactly balanced by a corresponding gain in classically accessible environmental records.

Beyond the solvable dynamics, the no-go theorem establishes a model-independent information-theoretic constraint on the coexistence of quantum error correction and Quantum Darwinism: the existence of Darwinistic redundancy requires the bare logical fidelity to lie below a critical threshold, independent of the microscopic Hamiltonian or environment structure. For the imperfect-recovery
channel considered here, this general bound translates directly
into a threshold for the post-recovery logical fidelity, while
the solvable model saturates the bound, demonstrating that it is
an extremal realization of the Darwinism--QEC competition.
Natural extensions include correlated or initially entangled
environments, spatially local interactions, general recovery
channels, and higher-dimensional logical encodings. 
More generally, these results suggest that decoherence, quantum error correction, and the emergence of objective classical information are governed by common information-theoretic principles rather than by the details of a particular microscopic model, providing a unified framework for understanding the quantum-to-classical transition in error-corrected quantum systems.

%==================================================
\section{Acknowledgement}

This research is supported by the Ministry of Education, Singapore, under its Academic Research Fund Programme (T2EP50222-0038, RG182/25 and RG154/24). 

\bibliography{qec_bibfile}

%apsrev4-2.bst 2019-01-14 (MD) hand-edited version of apsrev4-1.bst
%Control: key (0)
%Control: author (8) initials jnrlst
%Control: editor formatted (1) identically to author
%Control: production of article title (0) allowed
%Control: page (0) single
%Control: year (1) truncated
%Control: production of eprint (0) enabled
\begin{thebibliography}{25}%
\makeatletter
\providecommand \@ifxundefined [1]{%
 \@ifx{#1\undefined}
}%
\providecommand \@ifnum [1]{%
 \ifnum #1\expandafter \@firstoftwo
 \else \expandafter \@secondoftwo
 \fi
}%
\providecommand \@ifx [1]{%
 \ifx #1\expandafter \@firstoftwo
 \else \expandafter \@secondoftwo
 \fi
}%
\providecommand \natexlab [1]{#1}%
\providecommand \enquote  [1]{``#1''}%
\providecommand \bibnamefont  [1]{#1}%
\providecommand \bibfnamefont [1]{#1}%
\providecommand \citenamefont [1]{#1}%
\providecommand \href@noop [0]{\@secondoftwo}%
\providecommand \href [0]{\begingroup \@sanitize@url \@href}%
\providecommand \@href[1]{\@@startlink{#1}\@@href}%
\providecommand \@@href[1]{\endgroup#1\@@endlink}%
\providecommand \@sanitize@url [0]{\catcode `\\12\catcode `\$12\catcode `\&12\catcode `\#12\catcode `\^12\catcode `\_12\catcode `\%12\relax}%
\providecommand \@@startlink[1]{}%
\providecommand \@@endlink[0]{}%
\providecommand \url  [0]{\begingroup\@sanitize@url \@url }%
\providecommand \@url [1]{\endgroup\@href {#1}{\urlprefix }}%
\providecommand \urlprefix  [0]{URL }%
\providecommand \Eprint [0]{\href }%
\providecommand \doibase [0]{https://doi.org/}%
\providecommand \selectlanguage [0]{\@gobble}%
\providecommand \bibinfo  [0]{\@secondoftwo}%
\providecommand \bibfield  [0]{\@secondoftwo}%
\providecommand \translation [1]{[#1]}%
\providecommand \BibitemOpen [0]{}%
\providecommand \bibitemStop [0]{}%
\providecommand \bibitemNoStop [0]{.\EOS\space}%
\providecommand \EOS [0]{\spacefactor3000\relax}%
\providecommand \BibitemShut  [1]{\csname bibitem#1\endcsname}%
\let\auto@bib@innerbib\@empty
%</preamble>
\bibitem [{\citenamefont {Shor}(1995)}]{Shor_PRA_1995}%
  \BibitemOpen
  \bibfield  {author} {\bibinfo {author} {\bibfnamefont {P.~W.}\ \bibnamefont {Shor}},\ }\bibfield  {title} {\bibinfo {title} {Scheme for reducing decoherence in quantum computer memory},\ }\href {https://doi.org/10.1103/PhysRevA.52.R2493} {\bibfield  {journal} {\bibinfo  {journal} {Phys. Rev. A}\ }\textbf {\bibinfo {volume} {52}},\ \bibinfo {pages} {R2493(R)} (\bibinfo {year} {1995})}\BibitemShut {NoStop}%
\bibitem [{\citenamefont {Steane}(1996)}]{Steane_PRL_1996}%
  \BibitemOpen
  \bibfield  {author} {\bibinfo {author} {\bibfnamefont {A.~M.}\ \bibnamefont {Steane}},\ }\bibfield  {title} {\bibinfo {title} {Error correcting codes in quantum theory},\ }\href {https://doi.org/10.1103/PhysRevLett.77.793} {\bibfield  {journal} {\bibinfo  {journal} {Phys. Rev. Lett.}\ }\textbf {\bibinfo {volume} {77}},\ \bibinfo {pages} {793} (\bibinfo {year} {1996})}\BibitemShut {NoStop}%
\bibitem [{\citenamefont {Calderbank}\ and\ \citenamefont {Shor}(1996)}]{Shor_PRA_1996}%
  \BibitemOpen
  \bibfield  {author} {\bibinfo {author} {\bibfnamefont {A.~R.}\ \bibnamefont {Calderbank}}\ and\ \bibinfo {author} {\bibfnamefont {P.~W.}\ \bibnamefont {Shor}},\ }\bibfield  {title} {\bibinfo {title} {Good quantum error-correcting codes exist},\ }\href {https://doi.org/10.1103/PhysRevA.54.1098} {\bibfield  {journal} {\bibinfo  {journal} {Phys. Rev. A}\ }\textbf {\bibinfo {volume} {54}},\ \bibinfo {pages} {1098} (\bibinfo {year} {1996})}\BibitemShut {NoStop}%
\bibitem [{\citenamefont {Laflamme}\ \emph {et~al.}(1996)\citenamefont {Laflamme}, \citenamefont {Miquel}, \citenamefont {Paz},\ and\ \citenamefont {Zurek}}]{Zurek_PRL_1996}%
  \BibitemOpen
  \bibfield  {author} {\bibinfo {author} {\bibfnamefont {R.}~\bibnamefont {Laflamme}}, \bibinfo {author} {\bibfnamefont {C.}~\bibnamefont {Miquel}}, \bibinfo {author} {\bibfnamefont {J.~P.}\ \bibnamefont {Paz}},\ and\ \bibinfo {author} {\bibfnamefont {W.~H.}\ \bibnamefont {Zurek}},\ }\bibfield  {title} {\bibinfo {title} {Perfect quantum error correcting code},\ }\href {https://doi.org/10.1103/PhysRevLett.77.198} {\bibfield  {journal} {\bibinfo  {journal} {Phys. Rev. Lett.}\ }\textbf {\bibinfo {volume} {77}},\ \bibinfo {pages} {198} (\bibinfo {year} {1996})}\BibitemShut {NoStop}%
\bibitem [{\citenamefont {Knill}\ and\ \citenamefont {Laflamme}(1997)}]{Knill_PRA_1997}%
  \BibitemOpen
  \bibfield  {author} {\bibinfo {author} {\bibfnamefont {E.}~\bibnamefont {Knill}}\ and\ \bibinfo {author} {\bibfnamefont {R.}~\bibnamefont {Laflamme}},\ }\bibfield  {title} {\bibinfo {title} {Theory of quantum error-correcting codes},\ }\href {https://doi.org/10.1103/PhysRevA.55.900} {\bibfield  {journal} {\bibinfo  {journal} {Phys. Rev. A}\ }\textbf {\bibinfo {volume} {55}},\ \bibinfo {pages} {900} (\bibinfo {year} {1997})}\BibitemShut {NoStop}%
\bibitem [{\citenamefont {Gottesman}(1997)}]{Gottesman_Thesis_1997}%
  \BibitemOpen
  \bibfield  {author} {\bibinfo {author} {\bibfnamefont {D.}~\bibnamefont {Gottesman}},\ }\href {https://arxiv.org/abs/quant-ph/9705052} {\bibinfo {title} {Stabilizer codes and quantum error correction}} (\bibinfo {year} {1997}),\ \Eprint {https://arxiv.org/abs/quant-ph/9705052} {arXiv:quant-ph/9705052 [quant-ph]} \BibitemShut {NoStop}%
\bibitem [{\citenamefont {Calderbank}\ \emph {et~al.}(1998)\citenamefont {Calderbank}, \citenamefont {Rains}, \citenamefont {Shor},\ and\ \citenamefont {Sloane}}]{Shor_IEEE_1998}%
  \BibitemOpen
  \bibfield  {author} {\bibinfo {author} {\bibfnamefont {A.}~\bibnamefont {Calderbank}}, \bibinfo {author} {\bibfnamefont {E.}~\bibnamefont {Rains}}, \bibinfo {author} {\bibfnamefont {P.}~\bibnamefont {Shor}},\ and\ \bibinfo {author} {\bibfnamefont {N.}~\bibnamefont {Sloane}},\ }\bibfield  {title} {\bibinfo {title} {Quantum error correction via codes over gf(4)},\ }\href {https://doi.org/10.1109/18.681315} {\bibfield  {journal} {\bibinfo  {journal} {IEEE Transactions on Information Theory}\ }\textbf {\bibinfo {volume} {44}},\ \bibinfo {pages} {1369} (\bibinfo {year} {1998})}\BibitemShut {NoStop}%
\bibitem [{\citenamefont {Zurek}(2003)}]{Zurek_RMP_2003}%
  \BibitemOpen
  \bibfield  {author} {\bibinfo {author} {\bibfnamefont {W.~H.}\ \bibnamefont {Zurek}},\ }\bibfield  {title} {\bibinfo {title} {Decoherence, einselection, and the quantum origins of the classical},\ }\href {https://doi.org/10.1103/RevModPhys.75.715} {\bibfield  {journal} {\bibinfo  {journal} {Rev. Mod. Phys.}\ }\textbf {\bibinfo {volume} {75}},\ \bibinfo {pages} {715} (\bibinfo {year} {2003})}\BibitemShut {NoStop}%
\bibitem [{\citenamefont {Ollivier}\ \emph {et~al.}(2005)\citenamefont {Ollivier}, \citenamefont {Poulin},\ and\ \citenamefont {Zurek}}]{Zurek_PRA_2005}%
  \BibitemOpen
  \bibfield  {author} {\bibinfo {author} {\bibfnamefont {H.}~\bibnamefont {Ollivier}}, \bibinfo {author} {\bibfnamefont {D.}~\bibnamefont {Poulin}},\ and\ \bibinfo {author} {\bibfnamefont {W.~H.}\ \bibnamefont {Zurek}},\ }\bibfield  {title} {\bibinfo {title} {Environment as a witness: Selective proliferation of information and emergence of objectivity in a quantum universe},\ }\href {https://doi.org/10.1103/PhysRevA.72.042113} {\bibfield  {journal} {\bibinfo  {journal} {Phys. Rev. A}\ }\textbf {\bibinfo {volume} {72}},\ \bibinfo {pages} {042113} (\bibinfo {year} {2005})}\BibitemShut {NoStop}%
\bibitem [{\citenamefont {Zurek}(2009)}]{Zurek_NP_2009}%
  \BibitemOpen
  \bibfield  {author} {\bibinfo {author} {\bibfnamefont {W.~H.}\ \bibnamefont {Zurek}},\ }\bibfield  {title} {\bibinfo {title} {Quantum darwinism},\ }\href {https://doi.org/10.1038/nphys1202} {\bibfield  {journal} {\bibinfo  {journal} {Nature Physics}\ }\textbf {\bibinfo {volume} {5}},\ \bibinfo {pages} {181} (\bibinfo {year} {2009})}\BibitemShut {NoStop}%
\bibitem [{\citenamefont {Zwolak}\ \emph {et~al.}(2009)\citenamefont {Zwolak}, \citenamefont {Quan},\ and\ \citenamefont {Zurek}}]{Zurek_PRL_2009}%
  \BibitemOpen
  \bibfield  {author} {\bibinfo {author} {\bibfnamefont {M.}~\bibnamefont {Zwolak}}, \bibinfo {author} {\bibfnamefont {H.~T.}\ \bibnamefont {Quan}},\ and\ \bibinfo {author} {\bibfnamefont {W.~H.}\ \bibnamefont {Zurek}},\ }\bibfield  {title} {\bibinfo {title} {Quantum darwinism in a mixed environment},\ }\href {https://doi.org/10.1103/PhysRevLett.103.110402} {\bibfield  {journal} {\bibinfo  {journal} {Phys. Rev. Lett.}\ }\textbf {\bibinfo {volume} {103}},\ \bibinfo {pages} {110402} (\bibinfo {year} {2009})}\BibitemShut {NoStop}%
\bibitem [{\citenamefont {Unden}\ \emph {et~al.}(2019)\citenamefont {Unden}, \citenamefont {Louzon}, \citenamefont {Zwolak}, \citenamefont {Zurek},\ and\ \citenamefont {Jelezko}}]{Jelezko_PRL_2019}%
  \BibitemOpen
  \bibfield  {author} {\bibinfo {author} {\bibfnamefont {T.~K.}\ \bibnamefont {Unden}}, \bibinfo {author} {\bibfnamefont {D.}~\bibnamefont {Louzon}}, \bibinfo {author} {\bibfnamefont {M.}~\bibnamefont {Zwolak}}, \bibinfo {author} {\bibfnamefont {W.~H.}\ \bibnamefont {Zurek}},\ and\ \bibinfo {author} {\bibfnamefont {F.}~\bibnamefont {Jelezko}},\ }\bibfield  {title} {\bibinfo {title} {Revealing the emergence of classicality using nitrogen-vacancy centers},\ }\href {https://doi.org/10.1103/PhysRevLett.123.140402} {\bibfield  {journal} {\bibinfo  {journal} {Phys. Rev. Lett.}\ }\textbf {\bibinfo {volume} {123}},\ \bibinfo {pages} {140402} (\bibinfo {year} {2019})}\BibitemShut {NoStop}%
\bibitem [{\citenamefont {Ciampini}\ \emph {et~al.}(2018)\citenamefont {Ciampini}, \citenamefont {Pinna}, \citenamefont {Mataloni},\ and\ \citenamefont {Paternostro}}]{Paternostro_PRA_2018}%
  \BibitemOpen
  \bibfield  {author} {\bibinfo {author} {\bibfnamefont {M.~A.}\ \bibnamefont {Ciampini}}, \bibinfo {author} {\bibfnamefont {G.}~\bibnamefont {Pinna}}, \bibinfo {author} {\bibfnamefont {P.}~\bibnamefont {Mataloni}},\ and\ \bibinfo {author} {\bibfnamefont {M.}~\bibnamefont {Paternostro}},\ }\bibfield  {title} {\bibinfo {title} {Experimental signature of quantum darwinism in photonic cluster states},\ }\href {https://doi.org/10.1103/PhysRevA.98.020101} {\bibfield  {journal} {\bibinfo  {journal} {Phys. Rev. A}\ }\textbf {\bibinfo {volume} {98}},\ \bibinfo {pages} {020101(R)} (\bibinfo {year} {2018})}\BibitemShut {NoStop}%
\bibitem [{\citenamefont {Chen}\ \emph {et~al.}(2019)\citenamefont {Chen}, \citenamefont {Zhong}, \citenamefont {Li}, \citenamefont {Wu}, \citenamefont {Wang}, \citenamefont {Li}, \citenamefont {Liu}, \citenamefont {Lu},\ and\ \citenamefont {Pan}}]{Pan_Sci-Bulletin_2019}%
  \BibitemOpen
  \bibfield  {author} {\bibinfo {author} {\bibfnamefont {M.-C.}\ \bibnamefont {Chen}}, \bibinfo {author} {\bibfnamefont {H.-S.}\ \bibnamefont {Zhong}}, \bibinfo {author} {\bibfnamefont {Y.}~\bibnamefont {Li}}, \bibinfo {author} {\bibfnamefont {D.}~\bibnamefont {Wu}}, \bibinfo {author} {\bibfnamefont {X.-L.}\ \bibnamefont {Wang}}, \bibinfo {author} {\bibfnamefont {L.}~\bibnamefont {Li}}, \bibinfo {author} {\bibfnamefont {N.-L.}\ \bibnamefont {Liu}}, \bibinfo {author} {\bibfnamefont {C.-Y.}\ \bibnamefont {Lu}},\ and\ \bibinfo {author} {\bibfnamefont {J.-W.}\ \bibnamefont {Pan}},\ }\bibfield  {title} {\bibinfo {title} {Emergence of classical objectivity of quantum darwinism in a photonic quantum simulator},\ }\href {https://doi.org/https://doi.org/10.1016/j.scib.2019.03.032} {\bibfield  {journal} {\bibinfo  {journal} {Science Bulletin}\ }\textbf {\bibinfo {volume} {64}},\ \bibinfo {pages} {580} (\bibinfo {year} {2019})}\BibitemShut {NoStop}%
\bibitem [{\citenamefont {Girard}\ \emph {et~al.}(2026)\citenamefont {Girard}, \citenamefont {Cheng},\ and\ \citenamefont {Cao}}]{Girard_arXiv_2026}%
  \BibitemOpen
  \bibfield  {author} {\bibinfo {author} {\bibfnamefont {M.}~\bibnamefont {Girard}}, \bibinfo {author} {\bibfnamefont {G.}~\bibnamefont {Cheng}},\ and\ \bibinfo {author} {\bibfnamefont {C.}~\bibnamefont {Cao}},\ }\href {https://arxiv.org/abs/2606.06588} {\bibinfo {title} {Demystifying objectivity with operator algebra quantum error correction}} (\bibinfo {year} {2026}),\ \Eprint {https://arxiv.org/abs/2606.06588} {arXiv:2606.06588 [quant-ph]} \BibitemShut {NoStop}%
\bibitem [{\citenamefont {Cory}\ \emph {et~al.}(1998)\citenamefont {Cory}, \citenamefont {Price}, \citenamefont {Maas}, \citenamefont {Knill}, \citenamefont {Laflamme}, \citenamefont {Zurek}, \citenamefont {Havel},\ and\ \citenamefont {Somaroo}}]{Somaroo_PRL_1998}%
  \BibitemOpen
  \bibfield  {author} {\bibinfo {author} {\bibfnamefont {D.~G.}\ \bibnamefont {Cory}}, \bibinfo {author} {\bibfnamefont {M.~D.}\ \bibnamefont {Price}}, \bibinfo {author} {\bibfnamefont {W.}~\bibnamefont {Maas}}, \bibinfo {author} {\bibfnamefont {E.}~\bibnamefont {Knill}}, \bibinfo {author} {\bibfnamefont {R.}~\bibnamefont {Laflamme}}, \bibinfo {author} {\bibfnamefont {W.~H.}\ \bibnamefont {Zurek}}, \bibinfo {author} {\bibfnamefont {T.~F.}\ \bibnamefont {Havel}},\ and\ \bibinfo {author} {\bibfnamefont {S.~S.}\ \bibnamefont {Somaroo}},\ }\bibfield  {title} {\bibinfo {title} {Experimental quantum error correction},\ }\href {https://doi.org/10.1103/PhysRevLett.81.2152} {\bibfield  {journal} {\bibinfo  {journal} {Phys. Rev. Lett.}\ }\textbf {\bibinfo {volume} {81}},\ \bibinfo {pages} {2152} (\bibinfo {year} {1998})}\BibitemShut {NoStop}%
\bibitem [{\citenamefont {Yao}\ \emph {et~al.}(2012)\citenamefont {Yao}, \citenamefont {Wang}, \citenamefont {Chen}, \citenamefont {Gao}, \citenamefont {Fowler}, \citenamefont {Raussendorf}, \citenamefont {Chen}, \citenamefont {Liu}, \citenamefont {Lu}, \citenamefont {Deng}, \citenamefont {Chen},\ and\ \citenamefont {Pan}}]{Pan_Nature_2012}%
  \BibitemOpen
  \bibfield  {author} {\bibinfo {author} {\bibfnamefont {X.-C.}\ \bibnamefont {Yao}}, \bibinfo {author} {\bibfnamefont {T.-X.}\ \bibnamefont {Wang}}, \bibinfo {author} {\bibfnamefont {H.-Z.}\ \bibnamefont {Chen}}, \bibinfo {author} {\bibfnamefont {W.-B.}\ \bibnamefont {Gao}}, \bibinfo {author} {\bibfnamefont {A.~G.}\ \bibnamefont {Fowler}}, \bibinfo {author} {\bibfnamefont {R.}~\bibnamefont {Raussendorf}}, \bibinfo {author} {\bibfnamefont {Z.-B.}\ \bibnamefont {Chen}}, \bibinfo {author} {\bibfnamefont {N.-L.}\ \bibnamefont {Liu}}, \bibinfo {author} {\bibfnamefont {C.-Y.}\ \bibnamefont {Lu}}, \bibinfo {author} {\bibfnamefont {Y.-J.}\ \bibnamefont {Deng}}, \bibinfo {author} {\bibfnamefont {Y.-A.}\ \bibnamefont {Chen}},\ and\ \bibinfo {author} {\bibfnamefont {J.-W.}\ \bibnamefont {Pan}},\ }\bibfield  {title} {\bibinfo {title} {Experimental demonstration of topological error correction},\ }\href {https://doi.org/10.1038/nature10770} {\bibfield  {journal} {\bibinfo  {journal} {Nature}\ }\textbf {\bibinfo {volume}
  {482}},\ \bibinfo {pages} {489} (\bibinfo {year} {2012})}\BibitemShut {NoStop}%
\bibitem [{\citenamefont {C{\'o}rcoles}\ \emph {et~al.}(2015)\citenamefont {C{\'o}rcoles}, \citenamefont {Magesan}, \citenamefont {Srinivasan}, \citenamefont {Cross}, \citenamefont {Steffen}, \citenamefont {Gambetta},\ and\ \citenamefont {Chow}}]{Chow_Nature-Comm_2015}%
  \BibitemOpen
  \bibfield  {author} {\bibinfo {author} {\bibfnamefont {A.~D.}\ \bibnamefont {C{\'o}rcoles}}, \bibinfo {author} {\bibfnamefont {E.}~\bibnamefont {Magesan}}, \bibinfo {author} {\bibfnamefont {S.~J.}\ \bibnamefont {Srinivasan}}, \bibinfo {author} {\bibfnamefont {A.~W.}\ \bibnamefont {Cross}}, \bibinfo {author} {\bibfnamefont {M.}~\bibnamefont {Steffen}}, \bibinfo {author} {\bibfnamefont {J.~M.}\ \bibnamefont {Gambetta}},\ and\ \bibinfo {author} {\bibfnamefont {J.~M.}\ \bibnamefont {Chow}},\ }\bibfield  {title} {\bibinfo {title} {Demonstration of a quantum error detection code using a square lattice of four superconducting qubits},\ }\href {https://doi.org/10.1038/ncomms7979} {\bibfield  {journal} {\bibinfo  {journal} {Nature Communications}\ }\textbf {\bibinfo {volume} {6}},\ \bibinfo {pages} {6979} (\bibinfo {year} {2015})}\BibitemShut {NoStop}%
\bibitem [{\citenamefont {Ofek}\ \emph {et~al.}(2016)\citenamefont {Ofek} \emph {et~al.}}]{Schoelkopf_Nature_2016}%
  \BibitemOpen
  \bibfield  {author} {\bibinfo {author} {\bibfnamefont {N.}~\bibnamefont {Ofek}} \emph {et~al.},\ }\bibfield  {title} {\bibinfo {title} {Extending the lifetime of a quantum bit with error correction in superconducting circuits},\ }\href {https://doi.org/10.1038/nature18949} {\bibfield  {journal} {\bibinfo  {journal} {Nature}\ }\textbf {\bibinfo {volume} {536}},\ \bibinfo {pages} {441} (\bibinfo {year} {2016})}\BibitemShut {NoStop}%
\bibitem [{\citenamefont {Acharya}\ \emph {et~al.}(2023)\citenamefont {Acharya} \emph {et~al.}}]{Google_Nature_2023}%
  \BibitemOpen
  \bibfield  {author} {\bibinfo {author} {\bibfnamefont {R.}~\bibnamefont {Acharya}} \emph {et~al.},\ }\bibfield  {title} {\bibinfo {title} {Suppressing quantum errors by scaling a surface code logical qubit},\ }\href {https://doi.org/10.1038/s41586-022-05434-1} {\bibfield  {journal} {\bibinfo  {journal} {Nature}\ }\textbf {\bibinfo {volume} {614}},\ \bibinfo {pages} {676} (\bibinfo {year} {2023})}\BibitemShut {NoStop}%
\bibitem [{\citenamefont {DeGrace}\ \emph {et~al.}(2022)\citenamefont {DeGrace} \emph {et~al.}}]{IBM_Nature_2022}%
  \BibitemOpen
  \bibfield  {author} {\bibinfo {author} {\bibfnamefont {M.~M.}\ \bibnamefont {DeGrace}} \emph {et~al.},\ }\bibfield  {title} {\bibinfo {title} {Defining the risk of sars-cov-2 variants on immune protection},\ }\href {https://doi.org/10.1038/s41586-022-04690-5} {\bibfield  {journal} {\bibinfo  {journal} {Nature}\ }\textbf {\bibinfo {volume} {605}},\ \bibinfo {pages} {640} (\bibinfo {year} {2022})}\BibitemShut {NoStop}%
\bibitem [{\citenamefont {Ryan-Anderson}\ \emph {et~al.}(2021)\citenamefont {Ryan-Anderson} \emph {et~al.}}]{Anderson_PRX_2021}%
  \BibitemOpen
  \bibfield  {author} {\bibinfo {author} {\bibfnamefont {C.}~\bibnamefont {Ryan-Anderson}} \emph {et~al.},\ }\bibfield  {title} {\bibinfo {title} {Realization of real-time fault-tolerant quantum error correction},\ }\href {https://doi.org/10.1103/PhysRevX.11.041058} {\bibfield  {journal} {\bibinfo  {journal} {Phys. Rev. X}\ }\textbf {\bibinfo {volume} {11}},\ \bibinfo {pages} {041058} (\bibinfo {year} {2021})}\BibitemShut {NoStop}%
\bibitem [{\citenamefont {Zurek}(1982)}]{Zurek_PRD_1982}%
  \BibitemOpen
  \bibfield  {author} {\bibinfo {author} {\bibfnamefont {W.~H.}\ \bibnamefont {Zurek}},\ }\bibfield  {title} {\bibinfo {title} {Environment-induced superselection rules},\ }\href {https://doi.org/10.1103/PhysRevD.26.1862} {\bibfield  {journal} {\bibinfo  {journal} {Phys. Rev. D}\ }\textbf {\bibinfo {volume} {26}},\ \bibinfo {pages} {1862} (\bibinfo {year} {1982})}\BibitemShut {NoStop}%
\bibitem [{\citenamefont {Le}\ and\ \citenamefont {Olaya-Castro}(2019)}]{Castro_PRL_2019}%
  \BibitemOpen
  \bibfield  {author} {\bibinfo {author} {\bibfnamefont {T.~P.}\ \bibnamefont {Le}}\ and\ \bibinfo {author} {\bibfnamefont {A.}~\bibnamefont {Olaya-Castro}},\ }\bibfield  {title} {\bibinfo {title} {Strong quantum darwinism and strong independence are equivalent to spectrum broadcast structure},\ }\href {https://doi.org/10.1103/PhysRevLett.122.010403} {\bibfield  {journal} {\bibinfo  {journal} {Phys. Rev. Lett.}\ }\textbf {\bibinfo {volume} {122}},\ \bibinfo {pages} {010403} (\bibinfo {year} {2019})}\BibitemShut {NoStop}%
\bibitem [{\citenamefont {Korbicz}(2021)}]{Korbicz_Quantum_2021}%
  \BibitemOpen
  \bibfield  {author} {\bibinfo {author} {\bibfnamefont {J.~K.}\ \bibnamefont {Korbicz}},\ }\bibfield  {title} {\bibinfo {title} {Roads to objectivity: {Q}uantum {D}arwinism, {S}pectrum {B}roadcast {S}tructures, and {S}trong quantum {D}arwinism – a review},\ }\href {https://doi.org/10.22331/q-2021-11-08-571} {\bibfield  {journal} {\bibinfo  {journal} {{Quantum}}\ }\textbf {\bibinfo {volume} {5}},\ \bibinfo {pages} {571} (\bibinfo {year} {2021})}\BibitemShut {NoStop}%
\end{thebibliography}%

\appendix
\begin{widetext}

% ---- Subsection A ----
\section{Collective Hamiltonian}
\label{sec:collective_H}

The main-text model involves a logical block $b$ coupled to
$N$ independent environment qubits through the Hamiltonian
\begin{equation}
\hat{H}_N
=
g_Z\,\hat{Z}_b\otimes\hat{S}_Z
+
g_X\,\hat{X}_b\otimes\hat{S}_X,
\label{eq:HN_SM}
\end{equation}
where $\hat{S}_{Z(X)}=\sum_{k=1}^N Z_k(X_k)$ are collective
environment spin operators, and $\hat{Z}_b=Z^{\otimes 3}$,
$\hat{X}_b=X^{\otimes 3}$ act on the logical block.
The $g_Z$ interaction produces pure dephasing in the eigenbasis
of $\hat{Z}_b$, giving rise to logical decoherence in the
logical basis, while the $g_X$ interaction induces coherent
logical-state mixing that acts as an uncorrectable logical
error channel. The two terms therefore describe the dominant
dephasing interaction ($g_Z$) together with a non-commuting
perturbation ($g_X$), providing the minimal model used to
assess the robustness of the analytical results.

% ---- Subsection B ----
\section{Many-environment spectrum}
\label{sec:nontrivial}

For a single environment qubit ($N=1$), $\hat{H}_1^2$ is
proportional to the identity on the environment, giving the
three-level spectrum $\{0,\pm\sqrt{g_Z^2+g_X^2}\}$ that
underlies the exact closed-form solutions of the main text.
For $N\ge2$ this simplification no longer holds.
A direct calculation using $\hat{Z}_b^2=\hat{X}_b^2=\mathbb{I}_b$
and the Pauli relation $\hat{Z}_b\hat{X}_b+\hat{X}_b\hat{Z}_b=0$
gives
\begin{equation}
\hat{H}_N^2
=
\mathbb{I}_b\otimes
\bigl(g_Z^2\hat{S}_Z^2+g_X^2\hat{S}_X^2\bigr)
-
2g_Zg_X\,\hat{Y}_b\otimes\hat{S}_Y,
\label{eq:HN2}
\end{equation}
where we used the collective commutator
$[\hat{S}_X,\hat{S}_Z]=-[\hat{S}_Z,\hat{S}_X]=2i\hat{S}_Y$.
For $N\ge2$ the operator
$g_Z^2\hat{S}_Z^2+g_X^2\hat{S}_X^2$ is not proportional to
the environment identity; for example, with $N=2$,
\begin{equation}
g_Z^2\hat{S}_Z^2+g_X^2\hat{S}_X^2
=
(g_Z^2+g_X^2)\cdot 2\mathbb{I}
+
2g_Z^2 Z_1Z_2
+
2g_X^2 X_1X_2,
\end{equation}
which carries a nontrivial two-qubit spectrum.
Consequently, closed-form diagonalization is no longer
available in general, and exact diagonalization provides
the numerical approach used throughout this work.

% ---- Subsection C ----
\section{Weak-coupling approximation}
\label{sec:weakcoupling}

To obtain analytical progress beyond $g_X=0$ we decompose
\begin{equation}
\hat{H}_N = \hat{H}_0 + \hat{V},
\qquad
\hat{H}_0 = g_Z\hat{Z}_b\otimes\hat{S}_Z,
\qquad
\hat{V} = g_X\hat{X}_b\otimes\hat{S}_X,
\label{eq:perturbation}
\end{equation}
and treat $\hat{V}$ as a perturbation valid for
$g_X/g_Z\ll1$.
The unperturbed Hamiltonian $\hat{H}_0$ is exactly solvable
for arbitrary $N$ (main text, Sec.~II), providing a
systematic starting point for the perturbative analysis which forms the basis of the perturbative analysis presented
below.

% ============================================================
\section{Exact dynamics under $\hat{H}_0$}
\label{sec:H0}
% ============================================================
We derive the time-evolved state under the unperturbed
Hamiltonian
$\hat{H}_0=g_Z\hat{Z}_b\otimes\hat{S}_Z$.
The logical block is initialized in the logical codeword
$|\bar{z}_+\rangle_b=(|000\rangle+|111\rangle)/\sqrt2$,
corresponding to the encoded logical qubit, while the
environment is assumed to begin in an uncorrelated product
state with each qubit prepared in
$|+\rangle=(|0\rangle+|1\rangle)/\sqrt2$.
The initial state is therefore
\[
|\Psi(0)\rangle
=
|\bar{z}_+\rangle_b
\otimes
|+\rangle^{\otimes N}.
\]

Since $\hat{Z}_b^2 = \mathbb{I}_b$ and $[Z_i, Z_j] = 0$ for
all $i\neq j$, the operator powers satisfy
$\hat{H}_0^n = g_Z^n\hat{Z}_b^n\otimes\hat{S}_Z^n$.
Expanding the propagator as a Taylor series and separating
even and odd powers, using
$\hat{Z}_b^{2m}|\bar{z}_+\rangle = |\bar{z}_+\rangle$ and
$\hat{Z}_b^{2m+1}|\bar{z}_+\rangle = |\bar{z}_-\rangle$, gives
$$
|\Psi_{+}(t)\rangle
=
e^{-it\hat H_0}
|\Psi(0)\rangle.
$$
\begin{equation}
|\Psi_{+}(t)\rangle
=
|\bar{z}_+\rangle_b
\otimes
\sum_{m=0}^{\infty}
\frac{(-1)^m(g_Z t)^{2m}}{(2m)!}
\hat{S}_Z^{2m}
|{+}\rangle^{\otimes N}
-i\,|\bar{z}_-\rangle_b
\otimes
\sum_{m=0}^{\infty}
\frac{(-1)^m(g_Z t)^{2m+1}}{(2m+1)!}
\hat{S}_Z^{2m+1}
|{+}\rangle^{\otimes N}.
\end{equation}
The subscript $``+"$ indicates evolution from the initial logical state $|\bar{z}_+\rangle$.
Recognizing the operator cosine and sine series yields the
exact evolved state
\begin{equation}
|\Psi_{+}(t)\rangle
=
|\bar{z}_+\rangle_b\otimes
\cos(g_Z t\hat{S}_Z)|{+}\rangle^{\otimes N}
-
i\,|\bar{z}_-\rangle_b\otimes
\sin(g_Z t\hat{S}_Z)|{+}\rangle^{\otimes N},
\label{eq:psi_exact}
\end{equation}
and the corresponding exact propagator
\begin{equation}
\hat{U}_N(t)
=
\cos(g_Z t\hat{S}_Z)\,\mathbb{I}_b
-
i\hat{Z}_b\otimes\sin(g_Z t\hat{S}_Z).
\label{eq:UN}
\end{equation}
%Because $[Z_i, Z_j]=0$ for all pairs, the operator functions $\cos(g_Z t\hat{S}_Z)$ and $\sin(g_Z t\hat{S}_Z)$ factorize over the individual environment qubits:
Since $ \hat S_Z=\sum_{k=1}^N Z_k, $
and the operators $Z_k$ mutually commute,
the exponential factorizes as 
$
e^{-ig_Zt\hat S_Z}
=
\prod_{k=1}^N
e^{-ig_ZtZ_k},
$
from which the corresponding factorization of
$\cos(g_Zt\hat S_Z)$ and
$\sin(g_Zt\hat S_Z)$ follows immediately.
\begin{equation}
\cos(g_Z t\hat{S}_Z)|{+}\rangle^{\otimes N}
=
\bigotimes_{k=1}^N
\cos(g_Z t Z_k)|{+}\rangle_k,
\label{eq:factorize}
\end{equation}
and analogously for the sine term.
This factorization is the key property that makes the dynamics
analytically tractable for arbitrary $N$: each environment
qubit evolves independently, conditioned on the logical-block state.

% ============================================================
\section{Symmetric evolution of the two logical branches}
\label{sec:symmetric_branches}
% ============================================================

\subsection{Evolution of the $|\bar{z}_+\rangle_b$ branch}

Applying the exact propagator Eq.~\eqref{eq:UN} to the
initial state
$|\Psi_+(0)\rangle = |\bar{z}_+\rangle_b\otimes|{+}\rangle^{\otimes N}$
and using $\hat{Z}_b|\bar{z}_\pm\rangle_b = |\bar{z}_\mp\rangle_b$ gives
\begin{equation}
|\Psi_+(t)\rangle
=
|\bar{z}_+\rangle_b\otimes\cos(g_Z t\hat{S}_Z)|{+}\rangle^{\otimes N}
-
i\,|\bar{z}_-\rangle_b\otimes\sin(g_Z t\hat{S}_Z)|{+}\rangle^{\otimes N}.
\label{eq:psi_plus_cossin}
\end{equation}
We now introduce the two environment branch states
\begin{align}
|\phi_+(t)\rangle &= e^{-ig_Z t\hat{S}_Z}|{+}\rangle^{\otimes N},
\label{eq:phi_plus}
\\
|\phi_-(t)\rangle &= e^{+ig_Z t\hat{S}_Z}|{+}\rangle^{\otimes N},
\label{eq:phi_minus}
\end{align}
in terms of which the operator cosine and sine act as
\begin{align}
\cos(g_Z t\hat{S}_Z)|{+}\rangle^{\otimes N}
&=
\frac{|\phi_+(t)\rangle + |\phi_-(t)\rangle}{2},
\label{eq:cos_phi}
\\
\sin(g_Z t\hat{S}_Z)|{+}\rangle^{\otimes N}
&=
\frac{|\phi_-(t)\rangle - |\phi_+(t)\rangle}{2i}.
\label{eq:sin_phi}
\end{align}
Substituting Eqs.~\eqref{eq:cos_phi}--\eqref{eq:sin_phi}
into Eq.~\eqref{eq:psi_plus_cossin} and simplifying:
%\begin{align}
%|\Psi_+(t)\rangle &= |\bar{z}_+\rangle_b\otimes \frac{|\phi_+\rangle+|\phi_-\rangle}{2} - |\bar{z}_-\rangle_b\otimes \frac{|\phi_+\rangle-|\phi_-\rangle}{2}.
%\end{align}
%Collecting terms by the environment branch state yields the exact result
\begin{equation}
|\Psi_+(t)\rangle
=
\frac{|\bar{z}_+\rangle_b+|\bar{z}_-\rangle_b}{2}
\,|\phi_+(t)\rangle
+
\frac{|\bar{z}_+\rangle_b-|\bar{z}_-\rangle_b}{2}
\,|\phi_-(t)\rangle.
\label{eq:psi_plus_final}
\end{equation}
Note that $(|\bar{z}_+\rangle_b\pm|\bar{z}_-\rangle_b)/\sqrt{2}$
are proportional to the computational-basis states
$|0\rangle^{\otimes 3}$ and $|1\rangle^{\otimes 3}$
respectively, so Eq.~\eqref{eq:psi_plus_final} expresses the
evolved state as a superposition of two definite-parity
logical branches, each entangled with a distinct environment
state.

% ============================================================
\subsection{Evolution of the $|\bar{z}_-\rangle_b$ branch}
\label{sec:minusbranch}
% ============================================================

For the orthogonal initial state
$|\Psi_-(0)\rangle = |\bar{z}_-\rangle_b\otimes|{+}\rangle^{\otimes N}$,
applying $\hat{U}_N(t)$ and using
$\hat{Z}_b|\bar{z}_-\rangle_b = |\bar{z}_+\rangle_b$ gives
\begin{equation}
|\Psi_-(t)\rangle
=
|\bar{z}_-\rangle_b\otimes\cos(g_Z t\hat{S}_Z)|{+}\rangle^{\otimes N}
-
i\,|\bar{z}_+\rangle_b\otimes\sin(g_Z t\hat{S}_Z)|{+}\rangle^{\otimes N}.
\label{eq:psi_minus_cossin}
\end{equation}
Substituting the identities Eqs.~\eqref{eq:cos_phi}
and~\eqref{eq:sin_phi} and simplifying 
%$-i/(2i)=-1/2$:
%\begin{align}
%|\Psi_-(t)\rangle &= \frac{1}{2}|\bar{z}_-\rangle_b \bigl(|\phi_+(t)\rangle+|\phi_-(t)\rangle\bigr) +\frac{1}{2}|\bar{z}_+\rangle_b \bigl(|\phi_+(t)\rangle-|\phi_-(t)\rangle\bigr).
%\end{align}
%Collecting by environment branch state yields the exact result
\begin{equation}
|\Psi_-(t)\rangle
=
\frac{|\bar{z}_+\rangle_b+|\bar{z}_-\rangle_b}{2}
\,|\phi_+(t)\rangle
-
\frac{|\bar{z}_+\rangle_b-|\bar{z}_-\rangle_b}{2}
\,|\phi_-(t)\rangle.
\label{eq:psi_minus_final}
\end{equation}

\paragraph{Symmetry between the two branches.}
Comparing Eqs.~\eqref{eq:psi_plus_final}
and~\eqref{eq:psi_minus_final},
\begin{align}
|\Psi_+(t)\rangle
&=
\frac{|\bar{z}_+\rangle_b+|\bar{z}_-\rangle_b}{2}\,|\phi_+(t)\rangle
+
\frac{|\bar{z}_+\rangle_b-|\bar{z}_-\rangle_b}{2}\,|\phi_-(t)\rangle,
\label{eq:psi_plus_recall}
\\
|\Psi_-(t)\rangle
&=
\frac{|\bar{z}_+\rangle_b+|\bar{z}_-\rangle_b}{2}\,|\phi_+(t)\rangle
-
\frac{|\bar{z}_+\rangle_b-|\bar{z}_-\rangle_b}{2}\,|\phi_-(t)\rangle.
\label{eq:psi_minus_recall}
\end{align}
The two branches share identical coefficients on $|\phi_+(t)\rangle$
and differ only in the sign of the $|\phi_-(t)\rangle$ coefficient.
This perfect symmetry reflects the $\hat{Z}_b\to-\hat{Z}_b$
invariance of $\hat{H}_0$: reversing the logical branch simply
reverses the relative phase between the two environment states.
The subsequent analysis — decoherence factor, logical fidelity,
Holevo information, and Darwinistic redundancy — is governed
entirely by the overlap
$\Gamma_N(t)=\langle\phi_-(t)|\phi_+(t)\rangle$,
which is the same for both branches.

% ============================================================
\section{Decoherence factor and logical coherence}
\label{sec:decoherence}
% ============================================================

The central quantity governing all subsequent results is the
overlap between the two environment branch states,
\begin{equation}
\Gamma_N(t)
=
\langle\phi_-(t)|\phi_+(t)\rangle,
\label{eq:GammaN_def}
\end{equation}
which controls the degree of logical decoherence.
As $\Gamma_N(t)$ decreases toward zero, the environment becomes
increasingly capable of distinguishing the two $\hat Z_b$
branches, thereby suppressing their coherence. In the logical
basis this loss of coherence appears as population transfer
between $|\bar z_+\rangle$ and $|\bar z_-\rangle$, reducing the
logical fidelity while increasing the classical information
accessible from the environment. The overlap $\Gamma_N (t)$ quantifies the distinguishability of the two conditional environment states: smaller overlap corresponds to greater distinguishability and therefore greater information acquired by the environment about the logical block.
%When $\Gamma_N(t)=1$ the two environment branches are identical and the logical block remains fully coherent; as $\Gamma_N(t)$ decreases toward zero, the environment becomes increasingly capable of distinguishing the two logical states, simultaneously suppressing logical coherence, reducing QEC fidelity, and increasing the classical information accessible to environment observers — the hallmark of Darwinistic behaviour.

\subsection{Factorization and single-qubit overlap}
Since the operators $Z_k$ mutually commute,
$
e^{\mp ig_Zt\hat S_Z}
=
\prod_{k=1}^{N}
e^{\mp ig_Zt Z_k},
$
and therefore the conditional environment states factorize as
$
|\phi_\pm(t)\rangle
=
\bigotimes_{k=1}^{N}
e^{\mp ig_ZtZ_k}|+\rangle_k.
$
Consequently, their overlap factorizes as
\begin{equation}
\Gamma_N(t)
=
\prod_{k=1}^{N}
\langle\phi_-^{(k)}(t)|\phi_+^{(k)}(t)\rangle
=
\bigl[
\langle\phi_-^{(1)}(t)|\phi_+^{(1)}(t)\rangle
\bigr]^N,
\label{eq:Gamma_factorize}
\end{equation}
where the second equality follows because all environment
qubits evolve identically.
For a single environment qubit,
\begin{align}
|\phi_+^{(1)}(t)\rangle
&=
\cos(g_Z t)|{+}\rangle - i\sin(g_Z t)|{-}\rangle,
\\
|\phi_-^{(1)}(t)\rangle
&=
\cos(g_Z t)|{+}\rangle + i\sin(g_Z t)|{-}\rangle,
\end{align}
as $e^{-i\theta \hat A}
=
\cos\theta\,\mathbb I
-
i\sin\theta\,\hat A$ 
for any operator $\hat A$ satisfying
$
\hat A^2=\mathbb I
$.
So the single-qubit inner product is
\begin{equation}
\langle\phi_-^{(1)}(t)|\phi_+^{(1)}(t)\rangle
=
\cos^2(g_Z t)
-
\sin^2(g_Z t)
=
\cos(2g_Z t),
\end{equation}
where we used $\langle{\pm}|{\pm}\rangle=1$, $\langle{\pm}|{\mp}\rangle=0$.
Substituting into Eq.~\eqref{eq:Gamma_factorize} gives the
exact decoherence factor
\begin{equation}
\Gamma_N(t)
=
\cos^N(2g_Z t),
\qquad
0\le t\le\frac{\pi}{4g_Z}.
\label{eq:GammaN}
\end{equation}
$\Gamma_N(t)$ decays monotonically from $\Gamma_N(0)=1$
(no decoherence) to $\Gamma_N(\pi/4g_Z)=0$ (complete
decoherence), with the decay accelerating as $N$ increases:
each additional environment qubit multiplies the overlap by
another factor of $\cos(2g_Z t)<1$, driving $\Gamma_N$
exponentially toward zero.

% ============================================================
\subsection{Reduced logical density matrix}
\label{sec:logicaldensity}
% ============================================================
Starting from $|\Psi_+(t)\rangle$ in Eq.~\eqref{eq:psi_plus_final},
the full density matrix is
$\rho(t)=|\Psi_+(t)\rangle\langle\Psi_+(t)|$.
Introducing the shorthand $|A\rangle=|\bar{z}_+\rangle_b+|\bar{z}_-\rangle_b$
and $|B\rangle=|\bar{z}_+\rangle_b-|\bar{z}_-\rangle_b$, tracing
over the environment gives
\begin{equation}
\rho_b(t)
=
\mathrm{Tr}_E[\rho(t)]
=
\frac{1}{4}
\Bigl[
|A\rangle\langle A|
+|B\rangle\langle B|
+\Gamma_N(t)\bigl(|A\rangle\langle B|+|B\rangle\langle A|\bigr)
\Bigr],
\end{equation}
%where $\Gamma_N(t)=\langle\phi_-(t)|\phi_+(t)\rangle$ and we used
%$\mathrm{Tr}_E[|\phi_\pm\rangle\langle\phi_\mp|]=\Gamma_N^{(*)}$.
where we used
$
\mathrm{Tr}_E\!\left(|\phi_+\rangle\langle\phi_-|\right)
=
\mathrm{Tr}_E\!\left(|\phi_-\rangle\langle\phi_+|\right)
=
\Gamma_N(t),
$
Substituting $|A\rangle\langle A|+|B\rangle\langle B|
=2(|\bar{z}_+\rangle\langle\bar{z}_+|+|\bar{z}_-\rangle\langle\bar{z}_-|)$
and $|A\rangle\langle B|+|B\rangle\langle A|
=2(|\bar{z}_+\rangle\langle\bar{z}_+|-|\bar{z}_-\rangle\langle\bar{z}_-|)$
yields the exact reduced block state
\begin{equation}
\rho_b(t)
=
\frac{1+\Gamma_N(t)}{2}\,|\bar{z}_+\rangle\langle\bar{z}_+|
+
\frac{1-\Gamma_N(t)}{2}\,|\bar{z}_-\rangle\langle\bar{z}_-|,
\label{eq:rho_block}
\end{equation}
which in the logical basis $\{|\bar{z}_+\rangle,|\bar{z}_-\rangle\}$
takes the diagonal matrix form
\begin{equation}
\rho_b(t)
=
\frac{1}{2}
\begin{pmatrix}
1+\Gamma_N(t) & 0 \\ 0 & 1-\Gamma_N(t)
\end{pmatrix}.
\label{eq:rho_matrix}
\end{equation}
The reduced state is diagonal in the logical basis
$\{|\bar z_+\rangle,|\bar z_-\rangle\}$, with populations
$p_\pm(t)=[1\pm\Gamma_N(t)]/2$. Thus, although the interaction
is pure dephasing in the $\hat Z_b$ eigenbasis
$\{|000\rangle,|111\rangle\}$, it appears as population transfer
between the logical codewords in the basis used throughout this
work. At $t=0$, $\Gamma_N=1$ and
$\rho_b=|\bar z_+\rangle\langle\bar z_+|$, whereas
$\Gamma_N\to0$ yields the maximally mixed logical state
$\rho_b\to\mathbb I_b/2$.

% ============================================================
\subsection{Weak-coupling Gaussian decoherence and timescale}
\label{sec:gaussian}
% ============================================================

For $g_Z t\ll1$, the Taylor expansion
$\cos(2g_Z t)\approx 1-2g_Z^2t^2$ gives
\begin{equation}
\Gamma_N(t)
\approx
(1-2g_Z^2t^2)^N
\approx
e^{-2Ng_Z^2t^2},
\label{eq:Gamma_gaussian}
\end{equation}
where the second approximation uses $(1-\epsilon)^N\approx e^{-N\epsilon}$
for $\epsilon=2g_Z^2t^2\ll1$.
This Gaussian envelope identifies the characteristic decoherence
timescale
\begin{equation}
\tau_D
=
\frac{1}{g_Z\sqrt{2N}},
\label{eq:tauD}
\end{equation}
at which $\Gamma_N(\tau_D)=e^{-1}$.
The $N^{-1/2}$ scaling means that each additional environment
qubit accelerates decoherence: adding qubits from $N$ to $N+1$
multiplies $\Gamma_N$ by a further factor $\cos(2g_Zt)<1$,
so even a weakly coupled environment of many qubits can produce
extremely rapid dephasing.
For example, with $\cos(2g_Z t)=0.99$, one finds
$\Gamma_1=0.99$ but $\Gamma_{1000}=0.99^{1000}\approx4\times10^{-5}$:
a factor of $\sim10^4$ suppression from environment size alone.
This collective amplification of decoherence by independent
environment qubits is the microscopic mechanism underlying
both the degradation of QEC performance and the onset of
Darwinistic redundancy studied in the main text.

% ============================================================
\section{Logical fidelity under imperfect recovery}
\label{sec:FLN}
% ============================================================

We incorporate imperfect QEC recovery via the channel
\begin{equation}
\mathcal{E}_\eta(\rho)
=
P_+\rho P_+
+
\eta\,\hat{Z}_b P_-\rho P_-\hat{Z}_b
+
(1-\eta)\,P_-\rho P_-,
\label{eq:channel_SM}
\end{equation}
where $P_\pm=|\bar{z}_\pm\rangle\langle\bar{z}_\pm|$ are the
logical projectors and $\eta\in[0,1]$ is the recovery
efficiency.
From the reduced block state Eq.~\eqref{eq:rho_block},
the syndrome probabilities are
\begin{equation}
p_+(t) = \frac{1+\Gamma_N(t)}{2},
\qquad
p_-(t) = \frac{1-\Gamma_N(t)}{2}.
\end{equation}
Since the recovery succeeds with probability $\eta$
whenever the block occupies the
$|\bar z_-\rangle$ sector,
the post-recovery logical fidelity is
$F_L^{(N)} = p_+ + \eta p_-$,
which gives
\begin{equation}
F_L^{(N)}(t,\eta)
=
\frac{1+\eta+(1-\eta)\Gamma_N(t)}{2}
=
\frac{1+\eta+(1-\eta)\cos^N(2g_Z t)}{2}.
\label{eq:FL_SM}
\end{equation}
This closed-form expression is valid for all $N$,
$t$, $\eta$, and $g_Z$.
It ranges from $F_L^{(N)}=1$ at $t=0$ (perfect fidelity,
no decoherence) to $F_L^{(N)}=(1+\eta)/2=F_c$ as
$\Gamma_N\to0$ (minimum achievable fidelity).

\paragraph{Weak-coupling regime and decoherence timescale.}
Using the Gaussian approximation
$\Gamma_N(t)\approx e^{-2Ng_Z^2t^2}$
derived in Eq.~\eqref{eq:Gamma_gaussian}, the fidelity
becomes
\begin{equation}
F_L^{(N)}(t,\eta)
\approx
\frac{1+\eta+(1-\eta)e^{-2Ng_Z^2t^2}}{2},
\qquad g_Z t\ll1.
\label{eq:FL_gaussian}
\end{equation}
The fidelity decays on the timescale $\tau_D$
(Eq.~\eqref{eq:tauD}) already identified from the decoherence
factor.
The $N^{-1/2}$ scaling has a direct operational consequence:
\begin{equation}
N\uparrow
\;\Longrightarrow\;
\tau_D \propto N^{-1/2} \implies
F_L^{(N)}(t,\eta)~\text{decays more rapidly},
\label{eq:cascade}
\end{equation}
Thus, increasing the number of environment qubits simultaneously accelerates the formation of redundant classical records in the environment—the mechanism underlying Quantum Darwinism—and the degradation of the post-recovery logical fidelity, reflecting the corresponding cost to QEC performance.
Equation~\eqref{eq:FL_SM} is the starting point for
deriving the exact Darwinism--QEC tradeoff in the following
section.

% ============================================================
\section{Holevo information of environment fragments}
\label{sec:holevo_fragments}
% ============================================================

We now quantify how much classical information about the logical
state is accessible to an observer holding $m$ environment
qubits ($1\le m\le N$).
This is the key quantity for Quantum Darwinism: redundancy
arises when many independent fragments each carry sufficient
classical information to identify the logical state of the
block.

\subsection{Conditional fragment states and their overlap}

%Since the global environment states factorize (Eq.~\eqref{eq:factorize})
Since the conditional environment states
$|\phi_\pm(t)\rangle$ are product states, the conditional state of the
first $m$ qubits given logical branch $\pm$ is the pure state
\begin{equation}
|\phi_\pm^{(m)}(t)\rangle
=
\bigotimes_{k=1}^{m}
\bigl[
\cos(g_Z t)|{+}\rangle_k
\mp
i\sin(g_Z t)|{-}\rangle_k
\bigr].
\label{eq:phi_m}
\end{equation}
By the same calculation as for the full environment
(Sec.~\ref{sec:decoherence}), the overlap of the two
conditional $m$-qubit states is
\begin{equation}
\gamma_m(t)
\;=\;
\langle\phi_-^{(m)}(t)|\phi_+^{(m)}(t)\rangle
\;=\;
\cos^m(2g_Z t)
\;=\;
\bigl[\Gamma_N(t)\bigr]^{m/N},
\label{eq:gamma_m}
\end{equation}
where the last equality shows that the $m$-qubit fragment
overlap is exactly the $m/N$ power of the full decoherence
factor: each environment qubit contributes one factor of
$\cos(2g_Z t)$.

\subsection{Holevo information}

With equal prior probabilities $\tfrac{1}{2}$ for each logical
state, the Holevo information of fragment $F$ is
\begin{equation}
\chi_m(t)
=
S(\bar{\rho}_F)
-
\tfrac{1}{2}S(\rho_F^{(+)})
-
\tfrac{1}{2}S(\rho_F^{(-)}),
\label{eq:holevo_def}
\end{equation}
where $\bar{\rho}_F=\tfrac{1}{2}(\rho_F^{(+)}+\rho_F^{(-)})$
and $\rho_F^{(\pm)}=|\phi_\pm^{(m)}\rangle\langle\phi_\pm^{(m)}|$.
Since both conditional states are pure,
$S(\rho_F^{(\pm)})=0$, so $\chi_m=S(\bar{\rho}_F)$.

To find the eigenvalues of $\bar{\rho}_F$, define the
normalized symmetric and antisymmetric combinations
\begin{equation}
|s\rangle
=
\frac{|\phi_+^{(m)}\rangle+|\phi_-^{(m)}\rangle}
     {\sqrt{2(1+\gamma_m)}},
\qquad
|a\rangle
=
\frac{|\phi_+^{(m)}\rangle-|\phi_-^{(m)}\rangle}
     {\sqrt{2(1-\gamma_m)}},
\label{eq:sa_basis}
\end{equation}
which are orthonormal since
$\langle s|a\rangle=0$.
Acting with $\bar{\rho}_F$ on $|s\rangle$:
\begin{equation}
\bar{\rho}_F|s\rangle
=
\tfrac{1}{2}\bigl(
\langle\phi_+^{(m)}|s\rangle\,|\phi_+^{(m)}\rangle
+
\langle\phi_-^{(m)}|s\rangle\,|\phi_-^{(m)}\rangle
\bigr)
=
\frac{1+\gamma_m}{2}\,|s\rangle,
\end{equation}
and similarly $\bar{\rho}_F|a\rangle=\tfrac{1-\gamma_m}{2}|a\rangle$.
The two eigenvalues of $\bar{\rho}_F$ are therefore
\begin{equation}
\mu_\pm
=
\frac{1\pm\gamma_m(t)}{2}
=
\frac{1\pm\cos^m(2g_Z t)}{2},
\label{eq:eigenvalues}
\end{equation}
giving the exact Holevo information
\begin{equation}
\chi_m(t)
=
H_2\!\left(
\frac{1+\cos^m(2g_Z t)}{2}
\right),
\label{eq:holevo}
\end{equation}
where $H_2(p)=-p\ln p-(1-p)\ln(1-p)$ is the binary entropy.

\subsection{Limiting behaviour}
At $t=0$: $\gamma_m=1$, so $\mu_+=1$, $\mu_-=0$, and
$\chi_m=H_2(1)=0$ — no information is accessible from any
fragment before decoherence begins.
At $t=\pi/4g_Z$: $\gamma_m=0$, so $\mu_\pm=\tfrac{1}{2}$
and $\chi_m=H_2(\tfrac{1}{2})=\ln2$ — the fragment carries
one full bit of classical information, perfectly distinguishing
the two logical states.
For fixed $t$, $\chi_m$ is strictly increasing in $m$:
larger fragments are more informative, since
$|\cos^m(2g_Zt)|<|\cos^{m'}(2g_Zt)|$ for $m>m'$.
This monotonicity is the microscopic basis for the onset of
Darwinistic redundancy studied in the following section.

% ============================================================
\section{Darwinistic redundancy}
\label{sec:redundancy}
% ============================================================

Quantum Darwinism is characterised not merely by information
leakage to the environment, but by its \emph{redundant
proliferation} into many independent fragments.
The redundancy $R_\delta$ quantifies how many non-overlapping
fragments each carry at least a fraction $(1-\delta)$ of the
maximum classical information $\ln2$.

\subsection{Definition and threshold condition}
\label{sec:redundancy_def}

Let $m_\delta(t)$ be the smallest fragment size satisfying
$\chi_{m_\delta}=(1-\delta)\ln2$.
Substituting Eq.~\eqref{eq:holevo}:
\begin{equation}
H_2\!\left(\frac{1+\cos^{m_\delta}(2g_Z t)}{2}\right)
=
(1-\delta)\ln2.
\label{eq:threshold}
\end{equation}
Since $H_2$ is monotone decreasing on $[\tfrac{1}{2},1]$,
this equation has a unique solution in terms of the
\emph{critical overlap}
\begin{equation}
\gamma^*(\delta)
=
2H_2^{-1}\!\bigl[(1-\delta)\ln2\bigr]-1,
\label{eq:gamma_star}
\end{equation}
where $H_2^{-1}$ is the upper-branch inverse
($p\in(\tfrac{1}{2},1)$).
Equation~\eqref{eq:threshold} then becomes
$\cos^{m_\delta}(2g_Z t)=\gamma^*(\delta)$,
with $\gamma^*\in(0,1)$ depending only on $\delta$.
The Darwinistic redundancy is
\begin{equation}
R_\delta(t) = \frac{N}{m_\delta(t)}.
\label{eq:Rdelta_def}
\end{equation}

\subsection{Closed-form redundancy}
\label{sec:exact_redundancy}

Taking the logarithm of $\cos^{m_\delta}(2g_Z t)=\gamma^*$
and solving for $m_\delta$ gives
\begin{equation}
m_\delta(t)
=
\frac{\ln\gamma^*(\delta)}{\ln[\cos(2g_Z t)]},
\label{eq:mdelta}
\end{equation}
where both numerator and denominator are negative for
$t\in(0,\pi/4g_Z)$, ensuring $m_\delta>0$.
Substituting into Eq.~\eqref{eq:Rdelta_def} yields the
exact result
\begin{equation}
R_\delta(t)
=
N\,
\frac{\ln[\cos(2g_Z t)]}{\ln\gamma^*(\delta)},
\qquad
t\in\Bigl(0,\,\frac{\pi}{4g_Z}\Bigr).
\label{eq:Rdelta_exact}
\end{equation}
Both logarithms are negative so $R_\delta>0$.
As $t\to0$, $R_\delta\to0$ (no coupling, no redundancy);
as $t\to\pi/4g_Z$, $\cos(2g_Zt)\to0$ and
$R_\delta\to+\infty$ (maximum decoherence, unbounded
redundancy for continuous $m_\delta$, cut off at $N$ for
finite environments).

\subsection{Weak-coupling limit}
\label{sec:weak_coupling}

For $g_Z t\ll1$, the expansion
$\ln[\cos(2g_Z t)]\simeq-2g_Z^2t^2$ gives
\begin{equation}
m_\delta(t)\simeq\frac{C(\delta)}{g_Z^2t^2},
\qquad
R_\delta(t)\simeq\frac{N g_Z^2t^2}{C(\delta)},
\label{eq:Rdelta_weak}
\end{equation}
where
$
C(\delta) = \frac{-\ln\gamma^*(\delta)}{2}
\label{eq:C_delta}
$
is a positive, $\delta$-dependent constant.
\noindent
The scaling $R_\delta\propto Ng_Z^2t^2$ shows that redundancy
grows linearly in $N$ (more environment qubits provide more
independent copies of the logical-state information) and
quadratically in the interaction parameter $g_Zt$ (stronger
coupling imprints more information per qubit).
The same parameter $g_Zt$ drives both the increase in
redundancy and the decrease in logical fidelity
(Eq.~\eqref{eq:FL_SM}), making the Darwinism--QEC competition
directly visible at the level of the microscopic coupling.

% ============================================================
\section{Exact Darwinism--QEC tradeoff}
\label{sec:tradeoff}
% ============================================================

The logical fidelity Eq.~\eqref{eq:FL_SM} and the redundancy
Eq.~\eqref{eq:Rdelta_exact} both depend on a single
microscopic parameter: the decoherence factor $\Gamma_N(t)
=\cos^N(2g_Zt)$.
Eliminating $\Gamma_N$ between them yields a direct,
parameter-free relation.
\\
\\
From Eq.~\eqref{eq:FL_SM},
\begin{equation}
\Gamma_N(t)
=
\frac{2F_L^{(N)}-(1+\eta)}{1-\eta},
\label{eq:Gamma_from_FL_SM}
\end{equation}
so
%\begin{equation}
%\cos(2g_Zt) = \left[ \frac{2F_L^{(N)}-(1+\eta)}{1-\eta} \right]^{1/N},
%\label{eq:cos_from_FL}
%\end{equation}
%and therefore
\begin{equation}
\ln[\cos(2g_Zt)]
=
\frac{1}{N}
\ln\!\left[
\frac{2F_L^{(N)}-(1+\eta)}{1-\eta}
\right].
\label{eq:log_cos}
\end{equation}
Substituting into Eq.~\eqref{eq:Rdelta_exact},
the $N$ in the numerator and the $1/N$ in
Eq.~\eqref{eq:log_cos} cancel exactly, giving
\begin{equation}
R_\delta
=
\frac{1}{\ln\gamma^*(\delta)}
\ln\!\left[
\frac{2F_L^{(N)}-(1+\eta)}{1-\eta}
\right].
\label{eq:tradeoff_SM}
\end{equation}
This is the central result: Eq.~\eqref{eq:tradeoff_SM}
is exact, contains no approximation, and is independent of
$t$, $g_Z$, and $N$.
The cancellation of $N$ reflects the fact that both
$F_L^{(N)}$ and $R_\delta$ are determined by $\Gamma_N$
alone: knowing the post-recovery fidelity uniquely fixes
the decoherence factor and hence the entire Darwinistic
information structure of the environment.
\\
\\
%\subsection{Properties of the tradeoff}
\paragraph{Non-negativity.}
Since $F_L^{(N)}\in[(1+\eta)/2,\,1]$, the argument of the
logarithm lies in $(0,1]$, so $\ln[\cdots]\le0$.
Combined with $\ln\gamma^*<0$, this gives $R_\delta\ge0$.

\paragraph{Strict monotone competition.}
Differentiating Eq.~\eqref{eq:tradeoff_SM}
with respect to $F_L^{(N)}$,
\begin{equation}
\frac{dR_\delta}{dF_L^{(N)}}
=
\frac{1}{\ln\gamma^*(\delta)}
\cdot
\frac{2}{2F_L^{(N)}-(1+\eta)}
< 0,
\label{eq:dR_dFL_SM}
\end{equation}
since both factors are negative.
Redundancy is therefore a strictly decreasing function of
logical fidelity: every improvement in QEC performance
reduces the Darwinistic redundancy and vice versa.

\paragraph{Boundary values.}
At $F_L^{(N)}=1$ (perfect QEC): the argument equals
$(1-\eta)/(1-\eta)=1$, so $\ln(\cdots)=0$ and $R_\delta=0$
— no classical records proliferate in the environment.
As $F_L^{(N)}\to F_c^+$ where $F_c=(1+\eta)/2$: the argument
$\to0^+$, so $\ln(\cdots)\to-\infty$ and
$R_\delta\to+\infty$ — unbounded redundancy at maximum
decoherence, cut off at $R_\delta=N$ for finite environments.

\subsection{Critical scaling near $F_c$}

Writing $\varepsilon=F_L^{(N)}-F_c$,
Eq.~\eqref{eq:tradeoff_SM} becomes
\begin{equation}
R_\delta
=
\frac{1}{\ln\gamma^*(\delta)}
\ln\!\left[\frac{2\varepsilon}{1-\eta}\right]
=
-\frac{1}{|\ln\gamma^*(\delta)|}
\ln\varepsilon
+
\mathrm{const},
\label{eq:tradeoff_eps}
\end{equation}
where $\mathrm{const}=-\ln[2/(1-\eta)]/|\ln\gamma^*|$
depends only on $\eta$ and $\delta$.
As $\varepsilon\to0^+$, $R_\delta$ diverges logarithmically,
\begin{equation}
R_\delta \sim \frac{|\ln\varepsilon|}{|\ln\gamma^*(\delta)|}
\quad
\text{as}
\quad
F_L^{(N)}\to F_c^+,
\label{eq:log_div}
\end{equation}
and accordingly,
\begin{equation}
\left|\frac{dR_\delta}{dF_L^{(N)}}\right|
=
\frac{1}{|\ln\gamma^*(\delta)|\,\varepsilon}
\sim
\varepsilon^{-1}
\label{eq:susceptibility_SM}
\end{equation}
diverges as a power law, quantifying the critical sensitivity
of the Darwinism--QEC competition near the onset of complete
decoherence.

% ============================================================
\section{Logical fidelity determines the Darwinism spectrum}
\label{sec:darwinism_spectrum}
% ============================================================

The exact tradeoff Eq.~\eqref{eq:tradeoff_SM} relates the
logical fidelity to the Darwinistic redundancy $R_\delta$,
which depends only on the threshold-crossing fragment size
$m_\delta$.
A strictly stronger result holds: the logical fidelity
uniquely determines the Holevo information $\chi_m$ of
\emph{every} environment fragment, for all $m=1,\ldots,N$.

Within the present model, eliminating $\Gamma_N(t)$ from
Eq.~\eqref{eq:holevo} using Eq.~\eqref{eq:Gamma_from_FL_SM}
and the identity $\cos^m(2g_Zt)=\Gamma_N^{m/N}(t)$
(valid for $t\in[0,\pi/4g_Z]$) yields the exact result:

For every fragment of $m$ environment qubits,
\begin{equation}
\chi_m
=
H_2\!\left[
\frac{1}{2}
\left(
1
+
\left(
\frac{2F_L^{(N)}-(1+\eta)}{1-\eta}
\right)^{m/N}
\right)
\right].
\label{eq:chi_from_FL}
\end{equation}

\noindent
Equation~\eqref{eq:chi_from_FL} contains no explicit
dependence on the interaction time or coupling strength.
Once the logical fidelity and the fragment fraction
$m/N$ are specified, the Holevo information of every
environment fragment is uniquely determined without
reference to the microscopic dynamics.
This is strictly stronger than the redundancy tradeoff, which
determines only the threshold crossing at $m=m_\delta$; here
the entire partial-information curve $\chi_m$ for all
$m$ is fixed by $F_L^{(N)}$ alone.
\\
\\
Since
$
\chi_m
=
H_2\!\left[
\frac{1+\Gamma_N^{m/N}}{2}
\right],
$
and $H_2((1+x)/2)$ is monotonically decreasing on
$x\in[0,1]$, the Holevo information increases monotonically
with fragment size,
$
\chi_1\le\chi_2\le\cdots\le\chi_N.
$
Thus progressively larger environment fragments reveal
increasingly more classical information about the logical
state, with the full environment attaining the maximum
accessible information.

\subsection{Full-environment Holevo information}

For the complete environment ($m=N$), the exponent $m/N=1$
and Eq.~\eqref{eq:chi_from_FL} simplifies to
\begin{align}
\chi_N
&=
H_2\!\left[
\frac{1}{2}
\left(
1+\frac{2F_L^{(N)}-(1+\eta)}{1-\eta}
\right)
\right]
=
H_2\!\left(
\frac{F_L^{(N)}-\eta}{1-\eta}
\right),
\label{eq:chiN_simp}
\end{align}
Equation~\eqref{eq:chiN_simp} establishes a direct
observable-to-observable connection: measuring the
post-recovery logical fidelity immediately fixes the total
classical information encoded in the full environment.
Together with Eq.~\eqref{eq:chi_from_FL}, this shows that a
measurement of the post-recovery logical fidelity
reconstructs the complete Holevo-information curve
$\{\chi_m\}_{m=1}^{N}$, establishing logical-fidelity
measurements and environment-fragment readout as
operationally equivalent probes of the same decoherence
process.

% ============================================================
\section{Exact information balance}
\label{sec:information_balance}
% ============================================================

The exact solution reveals a precise relationship between
the entropy generated in the logical block and the classical
information acquired by the complete environment.

From Eq.~\eqref{eq:rho_block}, the reduced block state is
diagonal in the logical basis with eigenvalues
$\lambda_\pm=(1\pm\Gamma_N)/2$, giving the von Neumann entropy
\begin{equation}
S(\rho_b)
=
H_2\!\left(\frac{1+\Gamma_N(t)}{2}\right).
\label{eq:S_rho_b}
\end{equation}
The full-environment Holevo information
(Eq.~\eqref{eq:holevo} with $m=N$) is
\begin{equation}
\chi_N
=
H_2\!\left(\frac{1+\Gamma_N(t)}{2}\right).
\label{eq:chiN_recall}
\end{equation}
Comparing Eqs.~\eqref{eq:S_rho_b} and~\eqref{eq:chiN_recall}
gives the exact information balance
\begin{equation}
S(\rho_b) = \chi_N.
\label{eq:info_balance}
\end{equation}
Every nat of von Neumann entropy generated in the logical
block is exactly accounted for by the Holevo information
stored in the full environment — no information is hidden
in quantum coherences between the conditional environment
states.

This equality holds because the two contributing factors
are both determined by $\Gamma_N(t)$ through the same
function $H_2((1+\Gamma_N)/2)$.
Microscopically, the two conditions responsible are:
(i)~the conditional environment states $|\phi_\pm(t)\rangle$
are pure (product states), so $S(\rho_E^{(\pm)})=0$ and
$\chi_N=S(\bar{\rho}_E)$; and
(ii)~$S(\bar{\rho}_E)=S(\rho_b)$, which follows from the
eigenvalue structure shared by $\bar{\rho}_E$ and $\rho_b$
through the common decoherence factor $\Gamma_N$.

\subsection{Connection with logical fidelity}

Substituting $\Gamma_N=(2F_L^{(N)}-(1+\eta))/(1-\eta)$
from Eq.~\eqref{eq:Gamma_from_FL_SM}, both sides of
Eq.~\eqref{eq:info_balance} become
\begin{equation}
S(\rho_b)
=
\chi_N
=
H_2\!\left(\frac{F_L^{(N)}-\eta}{1-\eta}\right).
\label{eq:balance_FL}
\end{equation}
The logical fidelity therefore simultaneously determines
the block entropy, the full-environment classical
information, and — through Eq.~\eqref{eq:chi_from_FL} —
the Holevo information of every environment fragment.
All three are governed by the single decoherence parameter
$\Gamma_N(t)$, which is in turn uniquely fixed by the
observable $F_L^{(N)}$.

% ============================================================
\section{General fragment-information bound}
\label{sec:general_bound}
% ============================================================

Before proving the model-independent no-go theorem, we derive
a general upper bound on the classical information accessible
from an arbitrary environment fragment. Unlike the exact
relations obtained for the solvable model, the following
bound relies only on general information-theoretic
properties and the purity of the joint block--environment
state.

\paragraph{Model-independent part.}
For any bipartite state and any subsystem $F\subseteq E$,
the partial trace $\mathcal{T}_F=\mathrm{Tr}_{E\setminus F}$
is a completely positive trace-preserving (CPTP) map.
The data-processing inequality for Holevo information gives
$
\chi(F) \le \chi(E).
$
The Holevo bound further gives $\chi(E)\le S(\rho_E)$,
and for a pure joint state $|\Psi\rangle_{bE}$ the
purification identity gives $S(\rho_E)=S(\rho_b)$.
Chaining these three inequalities yields the
\emph{model-independent} bound
\begin{equation}
\chi(F)
\le
\chi(E)
\le
S(\rho_b),
\label{eq:general_bound}
\end{equation}
which holds for any Hamiltonian, any $N$, and any initial
state, requiring only that the joint block--environment
state is pure.

\paragraph{Model-specific saturation.}
In the present solvable model, the conditional environment
states are pure, so
$
\chi(E)=S(\bar\rho_E).
$
Since $\bar\rho_E$ and $\rho_b$ share the same nonzero
eigenvalue spectrum,
$
\chi(E)=S(\rho_b),
$
showing that the general upper bound is saturated.
\\
\\
For fragments consisting of $m$ environment qubits,
Eq.~\eqref{eq:general_bound} becomes
\[
\chi_m\le S(\rho_b).
\]
Combining this general inequality with the entropy bound
$S(\rho_b)\le H_2(F_{\mathrm{bare}})$ proved in the main
text immediately yields
\[
\chi_m
\le
S(\rho_b)
\le
H_2(F_{\mathrm{bare}}),
\]
which forms the key ingredient of the model-independent
no-go theorem.

\end{widetext}
\end{document}